\newcommand{\N}{\mathbb{N}}
\newcommand{\R}{\mathbb{R}}
\newcommand{\BO}{\mathcal{O}}
\newcommand{\In}{\operatorname{\mbox{\sc in}}}
\newcommand{\Out}{\operatorname{\mbox{\sc out}}}
\newcommand{\En}{\operatorname{\mbox{\sc en}}}
\newcommand{\Ysig}{\operatorname{\mbox{\sc y}}}
\newcommand{\Asig}{\operatorname{\mbox{\sc a}}}
\newcommand{\Bsig}{\operatorname{\mbox{\sc b}}}
\newcommand{\E}{{E}}
\newcommand{\ME}{\mathcal{E}}
\newcommand{\WM}{\mbox{WM}}
\newcommand{\eSPF}{\mbox{eSPF}}
\begin{document}

\title{On Specifications and Proofs of Timed Circuits\thanks{%
This project has received funding from the European Research Council 
  (ERC) under the European Union's Horizon 2020 research and innovation programme (grant agreement 716562).
The work was supported by the Austrian Science Fund project DMAC (P32431) and
  by the ANR project DREAMY (ANR-21-CE48-0003).}}

\author{
Matthias F\"ugger \inst{1} \orcidID{0000-0001-5765-0301} \and 
Christoph Lenzen \inst{2} \orcidID{0000-0002-3290-0674} \and
Ulrich Schmid \inst{3}\thanks{Corresponding author (\emph{s@ecs.tuwien.ac.at}).} \orcidID{0000-0001-9831-8583}}

\authorrunning{M. F\"ugger, C. Lenzen, and U. Schmid}

\institute{
CNRS, LMF, ENS Paris-Saclay, Universit\'e  Paris-Saclay, Inria \and
CISPA Saarbr\"ucken \and
TU Wien}

\maketitle

\begin{abstract}
Given a discrete-state continuous-time reactive system, like a digital circuit,
  the classical approach is to first model it as a state transition system and then prove
  its properties.
Our contribution advocates a different approach: to directly operate on the input-output behavior
  of such systems, without identifying states and their transitions in the first place.
We discuss the benefits of this approach at hand of some examples, which demonstrate that it nicely integrates with concepts of self-stabilization and fault-tolerance. 
We also elaborate on some unexpected artefacts of module composition in our framework, and
conclude with some open research questions.
\end{abstract}

\section{Motivation and Overview}

Many physical systems dealt with by computational methods today do not operate on
  discrete values.
Examples range from electronic circuits to mechanical systems to chemical processes,
which all share that analog information is continuously processed.
Whereas natural and engineering sciences, in particular,
control theory, have been successful in finding and using accurate
  models for such continuous systems, primarily based on systems of differential equations,
  the complexity both involved in model development and model usage is often prohibitive:
Model composition and hierarchical modeling is usually difficult, and large simulation
  times and memory consumption as well as numerical instability typically limit the
  applicability of the resulting models in practice.

\noindent\textit{Discrete-valued abstractions.}
  Applying discrete-valued abstractions for modeling continuous-valued systems is hence
  an attractive alternative, and much of the big success of computer science is owed
  to their introduction.
Apart from being easily specified and understood, discrete abstractions usually
  involve all-digital information and finite (typically small) state-spaces that can be
  efficiently processed, transmitted, and stored.
Among the many success stories of this approach is digital circuit design,
  which is one of the key enablers of modern computer systems (and also
our main source of application examples):
While it is out of question to perform analog simulations of the billions of transistors and
  other analog electronic components that implement the logic gates in a modern
  \emph{very-large scale integration} (VLSI) circuit, applying digital timing
  simulations and verification techniques is common practice.

\noindent\textit{The need for accurate timed circuit models.}
Given the tremendous advances in VLSI technology, with clock speeds in
  the GHz range and voltage swings well below 1\,V \cite{ITRS07},
  modeling accuracy becomes a concern \cite{Con03}.
  Additionally, manufacturing \emph{process, temperature, and supply-voltage}
  (PVT) variations cause a large
  variability in switching and signal propagation delays.
Furthermore, reduced critical charges make the circuits more susceptible to ionizing
  particles \cite{Bau05,DW11} and
  electromagnetic interference \cite{MA01}, and feature sizes in the 10\,nm
  range also increase the likelihood of permanent errors due to manufacturing
  defects and wear-out \cite{KK98,PB93}.
All these effects together make non-conservative delay predictions, which are required for 
digital modeling of fast synchronous
  circuits, difficult to obtain.

Indeed, none of these effects is adequately captured by existing timed digital
  circuit models.
Besides the lack of modeling and analysis support for
  fault-tolerance, it was shown in \cite{FNS16:ToC} that none of the classic
  digital channel models, including the widely used pure and inertial delay channels
  \cite{Ung71}, faithfully model the
  propagation of short pulses.
The same is true for more advanced models like PID channels \cite{BJAV00} (with the notable
  exception of involution channels \cite{FNNS19:TCAD}, though).
Since existing digital simulators exclusively use
  classic delay models, their predictions are hence not always accurate.
Moreover, existing digital design tools lack an adequate support for
  metastability\footnote{Metastable upsets can occur
  in any state-holding device with discrete stable states, such as memory cells.
  If a new state transition is triggered before the state change caused by the
  previous one has settled, an intermediate output value
  may be observed arbitrarily late.
  Even Byzantine (i.e., ``worst-case'') fault-tolerance techniques are incapable of
  containing the effects of metastable upsets perfectly \cite{FFL18:ToC}, since a signal
  outside the (discrete) value domain is not just an arbitrary regular signal.
  Metastability is not restricted to electrical systems.
  For example, an engineered genetic toggle switch~\cite{GCC00},
    acting as a memory cell storing $0$ or~$1$, was observed to 
    exhibit metastable behavior besides its two stable states.}
  \cite{Mar81} modeling and analysis.

\noindent\textit{Modeling approaches: state-based and state-oblivious.}
A natural and powerful tool for modeling such systems are transition systems.
A transition system is defined by a set of states, transitions between
  these states, and rules how \emph{executions}, i.e., (timed) state sequences,
  are generated by such a system.
Transition systems can be white-box or black-box: white-box approaches try to
  follow the actual implementation and model the dynamics of a system's state,
  while black-box models just try to capture the dynamics of the system's
  inputs and outputs.
In the latter case, states are merely used as equivalence classes of
  execution prefixes (histories), to abstract away individual execution prefixes that do not
  need to be further distinguished when capturing the system's
  behavior.
We will refer to both variants as \emph{state-based specifications} in the following.

On the other hand, the correct behavior of a system can be
  directly specified by the set of valid executions.
For convenience, this is typically done in terms of an input-output
  function that maps a (timed) input state sequence to a set of
  allowed (timed) output state sequences.
We will refer to this as a \emph{state-oblivious specification}.

While the distinction is not strict, as a state-oblivious specification is easily translated
  to a state-based specification with the set of states being the
  set of all execution prefixes, i.e., trivial equivalence classes taken as states,
  the approaches tend to lead to quite different formalizations and proofs.
In computer science, state-based specifications have received lots of attention,
and can nowadays draw from a rich set of techniques, e.g., for showing that one system implements another system via
  simulation relations, or reasoning about properties of compositions of systems.

\noindent\textit{A take on a state-oblivious modeling framework.}
In this work, we advocate the considerably less popular alternative of
state-oblivious
  specifications and discuss some of its properties, building on the
framework originally presented in \cite{dolev14}. 

In \cref{sec:time} to \cref{sec:transientfaults}, we review
  the cornerstones of state-oblivious formalizations of continuous-time,
  dis\-crete-valued circuits as introduced in \cite{dolev14}.
Like in some existing approaches for reactive systems, such as Broy and St{\o}len's
  FOCUS \cite{BS01:focus}, a \emph{module} is specified directly in terms of the
  output behaviors that it may exhibit in response to a given input signal.
As already said, this is very different from existing frameworks that follow a state-based approach,
  including Alur-Dill Timed Automata
  \cite{Alur1994}, Lamport's TLA \cite{Lam94:TLA}, Timed IO Automatons by Keynar
  et.~al.~\cite{kaynar06}, and discrete abstractions for hybrid systems
  \cite{AHLP00}, as well as state-space-based control theory \cite{LV11}, which
  all act on the (sometimes uncountable) state-space of the underlying system.

We demonstrate that typical timing constraints and fault-tolerance properties (e.g.\ Byzantine behavior \cite{PSL80}) are easily expressed and dealt with in a
  state-oblivious framework.
In \cref{sec:SSOsc}, we also demonstrate that self-stabilizing systems \cite{Dol00}, in which the initial internal state
  is assumed to be completely arbitrary after a catastrophic (but transient) event,
  can be described and proved correct appropriately.

In \cref{sec:composition}, we address composition of modules.
Composition has been extensively studied in state-based approaches. In general, 
however, it is even difficult to decide whether behavioral specifications match
  at interface boundaries \cite{AHS02}.
While one can prove some generic properties about composition of state-oblivious specifications, they apply to quite restricted settings only.
In \cref{sec:weirdmodule}, we will show that there are indeed some unexpected module composition artefacts for state-oblivious specifications when one
considers more general settings.
In particular, the \emph{eventual short pulse filter} (eSPF) module introduced in 
  \cite[Sec.~7]{FNS16:ToC} reveals that composing modules with bounded delay 
  in a feedback loop may result in a module with finite delay.
Even worse, whereas the feedback-free composition of bounded-delay modules is always bounded delay,
  it turns out that the \emph{feedback-free} composition of finite-delay modules need not always be finite-delay. 
Some conclusions and problems open for future research are provided in
  Section~\ref{sec:outlook}.

\noindent\textit{Applications beyond VLSI circuits.}
Although our framework emerged in the context of digital
  circuits~\cite{dolev14}, its applicability extends to
  other domains. It needs to be stressed, though, that
  we do not aim at typical application domains of classic
  control theory. Despite some similarities, like considering
  systems as function transformers, our continuous-time discrete-value
  (typically binary) signals and their properties of interest
  are very different from the signals considered
  in either continuous or discrete control theory.

  However, the idea to model continuous dynamical
  systems by discrete-state timed circuits has been successfully applied to
  genetics as well:
Rather than analyzing dependencies of transcription and
  protein levels by means of differential equations, genetic circuit models have been
  used for descriptive~\cite{AO03,Tho73} and synthetic~\cite{GCC00,HMC02,SCBMTH08,BBN13}
  purposes. In the meantime, a body of genetic circuit design principles
  has been established \cite{nielsen2016genetic,gorochowski2017genetic,santos2020multistable}.
For a discussion on differences between classical circuits in silicon and genetic circuits,
  we refer the reader to \cite{fugger2020digital}. 
Further, as many biological systems are fault-tolerant and even
  self-stabilizing to a certain extent, a unified model bears the promise of
  cross-fertilization between different application domains.
Earlier work on biologically inspired self-stabilizing Byzantine fault-tolerant clock
  synchronization \cite{DDP03:SSS,sivan99lobster} is a promising example of the
  benefit of this approach.

\section{Timed Circuit Models}
\label{sec:time}

Before discussing basic properties of state-oblivious specifications via
  some examples, we very briefly recall the standard synchronous, asynchronous, and
  partially synchronous timing models for specifying distributed systems.
Obviously, they can be also used for modeling gates in a circuit that 
communicate with each other via interconnecting wires.

In \emph{synchronous systems}, components act in synchronized lock-step
\emph{rounds}, each comprising a communication phase and a single
computing step of every component. The strict constraints on the order
of computing steps thus facilitate algorithms that are
simple to analyze and implement, yet can leverage time to, e.g.,
avoid race conditions and implement communication by time~\cite{lamport78}.
Unfortunately, implementing the
synchronous abstraction, e.g., by central clocking or causal relations
enforced by explicit communication~\cite{awerbuch85}, can be too
inefficient or plainly infeasible.

This fact fuels the interest in \emph{asynchronous systems}, for which no assumptions
are made on the order in which computation and communication steps are executed.
The standard way of modeling asynchronous executions is to associate a local
  state with each component, and let a \emph{(fair) scheduler} decide in which order
  components communicate and update their states (i.e., receive information and
  perform computation).
Viewing synchrony as the temporally most ordered execution model of
  distributed computations, asynchronous systems are at the other extreme end of
  the spectrum. Since it is impossible to distinguish very slow components
  from such that have suffered from a crash fault, however, proving correct
  asynchronous distributed algorithms is difficult and often impossible.\footnote{Consensus~\cite{PSL80}, a basic fault-tolerance
  task, can be solved deterministically in synchronous systems, but not in
  asynchronous systems~\cite{fischer85impossibility}.}

To circumvent this problem, a number of intermediate \emph{partially-synchronous} state-based models have been
  defined (e.g.~\cite{Dolev:1987,DLS88}).
However, such models typically serve either
  special applications or as a vehicle to better understand the fundamental
  differences between synchronous and asynchronous systems.

\subsection{When state-based formalizations are unnecessarily complicated}
\label{sec:vs-statebased}

While synchronous and asynchronous systems are easy to specify within
  state-based frameworks, the situation becomes different for systems with
  more complicated timing constraints like partially synchronous systems.
The challenge in allowing for general timing constraints is that the elegant and
  convenient separation of time and the evolution of the system state cannot be
  maintained.
The situation becomes even more involved when the goal is to model circuits, as
  opposed to software-based computer systems.
A major difference is that software-based systems typically reside at a level of
  abstraction where discrete, well-separated actions are taken naturally by an underlying machine.
The evolution of the internal state of this machine is then modeled as a
  transition system.
  By contrast, real circuits are analog devices that continuously transform
  inputs into outputs.

We demonstrate the differences between the state-based and the state-oblivi\-ous
approach at the example of the arguably simplest circuit, namely, a
bounded-delay channel, as instantiated e.g.\ by an (ideal) wire.

\noindent\textit{A channel.}
We consider a binary bounded-delay first-in first-out (FIFO) channel, which
has a single input port (= connector) and a single output port.
Whereas such channels are also employed in various state-based models, 
they are usually \emph{part} of the model and typically also the only
  means of communication.
By contrast, we describe the channel as an object in a (to-be-defined) model.

Informally, we require the following:
The \emph{input port} is fed by an input signal given as
a function $\In: \R \to \{0,1\}$. Note that the restriction
to binary-valued  signals is for simplicity only and could be replaced
by arbitrary discrete ranges. The reason why the domain of $\In$
is $\R$ instead of, e.g., the non-negative reals $\R^+_0$, will be
explained later.  
Typically, some further restrictions are made on (input) signals for
modeling real circuits, e.g.,
  only a finite number of transitions within each finite interval.
For each input signal, the \emph{output port} produces an output signal such that:
  (i) for each input transition there is exactly one output transition within some time $d > 0$,
  and (ii) output transitions occur in the same temporal order as their corresponding input transitions.

\noindent\textit{The state-based approach: the channel as a transition system.}
A state-based description would model the state of the channel at 
some time~$t$,
  as well as the rules for transitioning between states.
Obviously, this would allow us to infer the correct behavior of the channel
  at times greater than~$t$ as valid traces of this transition system.
However, a state-based description would be at odds with our goal of a simple and modular
  specification of the system:
\begin{compactitem}
  \item Both a physical wire connecting sender and receiver and reliable 
    multi-hop wireless channel are implementations of our channel.
  Specifications with one of them in mind may differ significantly.

  \item The state space of the channel would be infinitely large, as it must be able to 
    record an arbitrarily long sequence of alternating
    input transitions that may have occurred within $[t-d,t]$

  \item The strategy of breaking down a difficult-to-describe 
state-based description into
    smaller building blocks does not help here.
\end{compactitem}

The above problems become even more pronounced when it comes to more interesting
  modules.
Even if we were not discouraged by the above difficulties and went for a state-based definition of
  the channel, e.g. in terms of Timed I/O Automata~\cite{kaynar06}, we argue that the original advantage of a
  state-based approach would be lost:
  the canonical description of the global state of the system as the product of
  the components' states.

\noindent\textit{The state-oblivious approach: the channel as an input-output function.}
We conclude that our preferred option is (i) to treat the channel as a blackbox, and
  (ii) not to bother with finding states, i.e., equivalence classes of histories, in the
  first place.
That is, we infer the possible output not from some internal state, but rather directly
  from the input history.
By the nature of a channel, however, the feasible output values at time~$t$ cannot be
  determined from $\In$ alone: the \emph{output} at times smaller than~$t$ enters the bargain as well.
Hence, we naturally end up directly relating input and output signals:
For each (possible) input signal $\In$, there must be a non-empty \emph{set} of
  \emph{feasible} output functions $\phi(\In)$, where the
  \emph{module specification} $\phi$ maps inputs signals to such feasible output signals. Note that we allow the adversary to choose which of the feasible output signals a
  module generates in some execution, i.e., we just assume non-determinism
  here.\footnote{Whereas one could extend our framework to restrict the adversary here,
  e.g., to capture probabilistic choice, we will not consider this possibility
  in this work.}
  This also allows to express any given restriction on the inputs, e.g., one that is considered suitable for a given module, simply by permitting
  \emph{any} output signal for input signals that violate such a restriction.

A state-oblivious specification of our channel can be 
given in terms of an input-output
  function~$\phi$.
For every input signal $\In$, the output signal $\Out$ is
  feasible, i.e., $\Out \in \phi(\In)$ if
%\begin{align*}
  $\Out(t)=\In\left(\delta^{-1}(t)\right)$,
%\end{align*}
  where the delay function $\delta: \R \to \R$ is continuous, strictly
  increasing (hence invertible), and satisfies
%\begin{align*}
  $t \leq \delta(t) \leq t+d$
%\end{align*}
  for all $t\in \R$.

%\medskip

As we model signals as functions of
  real-time, we do not need special signal values (as in FOCUS \cite{BS01:focus}) 
  that report the progress of time, and relating different (input, output) signals to each
  other becomes simple.
  
It remains to explain why we chose the whole set of reals $\R$ as the (time) domain of
  our input and output functions.
Again, in principle, nothing prevents us from using functions $\R^+_0\to \{0,1\}$.
For the considered channel, it would be reasonably easy to adapt the description:
  $\Out\in \phi(\In)$ is arbitrary on $[0,\delta(0))$ and
  $\Out(t)=\In(\delta^{-1}(t))$ on $[\delta(0),\infty)$, for some continuous,
  strictly increasing delay function $\delta:\R^+_0\to\R^+_0$ satisfying that
  $t\leq \delta(t)\leq t+d$ for all $t\in \R^+_0$.

However, given that our ``equivalent'' to the channel's state is its input history,
  it is more natural to rely on input signals with a time domain that contains
   $[-d,\infty)$ when specifying feasible outputs on $\R^+_0$.
Extending the range by a finite value only would not cover all possible
  values of $d$, though.
Moreover, there are modules whose output may depend on
  events that lie arbitrarily far in the past, e.g., a memory cell.
For simplicity and composability, picking $\R$ as domain is thus preferred here.
We remark that this convention does not prevent suitable initialization of a
  module, say at time $t=0$, however.

\section{Composition}
\label{sec:composition}

In the previous section, we demonstrated the use of state-oblivious specifications
  in terms of directly providing input-output functions $\phi$ for a simple channel. 
In general, we define:

\begin{definition}[Module]
  \label{def:module}
  A \emph{signal} is a function from $\R$ to $\{0,1\}$.
  A \emph{module} $M$ has a set of \emph{input ports} $I(M)$ and
  a set of \emph{output ports} $O(M)$, which are the connectors where
  input signals are supplied to $M$ and output signals leave $M$.
  The \emph{module specification} $\phi_M$ maps the input signals 
$(\In_p:\R\to\{0,1\})_{p\in I(M)}$ to sets of allowed output signals
$(\Out_p:\R\to\{0,1\})_{p\in O(M)}$.
An \emph{execution} of a module is a member of the set 
\[
\left\{\bigl( (\In_p)_{p\in I(M)},
                 \phi_M((\In_p)_{p\in I(M)}) \bigr) \mid (\In_p:\R\to\{0,1\})_{p\in I(M)} \right\}\,.
\]
\end{definition}
Note that we typically assume that modules are \emph{causal},
i.e., that images of $\phi_M$
  for two inputs that are identical until time $t$, are identical until time $t$.

Specifying a module~$M$ this way, i.e., by providing $\phi_M$, can either be viewed as
  stating an assumption, in the sense that it is already known how to build a module with the
  respective behavior, or as stating a problem, i.e., expressing a desired behavior of a module
  that still needs to be built.
We call a module specified this way a \emph{basic module}.

\emph{Implementing} such a module can be done in two different ways:
  (i) directly within a target-technology, which leaves the scope of our
modeling framework, or
  (ii) by decomposition into smaller modules within the modeling framework.

Let us now formalize what the latter means in the context of our approach.
Intuitively, we will take a set of modules and connect their input and output
  ports to form a larger \emph{compound module}, whose inputs and outputs are subsets
  of the ports of these modules (cp.~Figure~\ref{fig:osc}).
The input-output function of the compound module is then derived from the ones of the
  submodules and their interconnection.

\begin{definition}[Compound module]
\label{def:comp_module}
A compound module~$M$ is defined by:
\begin{compactenum}
\item Decide on the sets of \emph{input ports} $I(M)$ and
\emph{output ports} $O(M)$ of $M$.
\item Pick the set $S_M$ of \emph{submodules} of $M$. Each submodule $S\in
S_M$ has input ports $I(S)$, output ports $O(S)$, and
a specification $\phi_S$ that maps tuples
$(\In_p:\R\to\{0,1\})_{p\in I(S)}$ of functions to sets of tuples
$(\Out_p:\R\to\{0,1\})_{p\in O(S)}$ of functions, satisfying the following
well-formedness constraints:
\begin{compactitem}
\item For each output port $p\in O(M)$, there is exactly one submodule $S\in
S_M$ such that $p\in O(S)$.
\item For each input port $p\in I(S)$ of some submodule $S\in S_M$, either
$p\in I(M)$ or there is exactly one submodule $S'\in S_M$ so that $p\in O(S')$.
\end{compactitem}
\item For each $(\In_p:\R\to\{0,1\})_{p\in I(M)}$, we require
%\begin{align*}
$\left( \Out_p:\R\to\{0,1\} \right)_{p\in O(M)} \in
\phi_M\left( (\In_p)_{p\in I(M)} \right)$
%\end{align*}
iff there exist functions
$(f_p:\R\to\{0,1\})_{p\in \bigcup_{S\in S_M} I(S)\cup O(S)}$ with
\begin{compactitem}
\item $\forall S\in S_M: (f_p)_{p\in O(S)}\in \phi_S((f_p)_{p\in I(S)})$;
\item $\forall p\in I(M): f_p=\In_p$; and
\item $\forall p\in O(M): f_p=\Out_p$.
\end{compactitem}
\end{compactenum}
\end{definition}
Note that choices are made only in Steps~1 and~2, whereas $\phi_M$ is defined
implicitly and non-constructively in Step~3. Informally, the latter just
says that any execution, i.e., any pair of input and output signals,
of $M$ that leads to feasible executions of all submodules, must be in
$\phi_M$.

\begin{example}[Oscillator]
Figure~\ref{fig:osc} shows an example compound module: a simple resettable
  digital oscillator. It is composed of an inverter, an \textsc{And} gate, and a fixed unit
  delay channel, i.e., a FIFO channel with $\delta(t):=t+1$.
Both gates operate in zero-time, i.e., all delays have been lumped into the FIFO channel. 
\end{example}

\begin{figure}[h]
\centering
\begin{tikzpicture}[>=latex,semithick,scale=1.0,transform shape]

\tikzstyle{arrow}=[draw, -latex]

% boundary
\draw [rounded corners,color=blue,dashed] (1,-0.1) rectangle (5.8,1.9);

% submodules
\node[thick,rectangle,draw=black,minimum width=1cm,minimum height=0.5cm, rounded corners] at (2.2,1) (del1) {$d$};
\node[not gate US, draw] at (3.5,1) (inv1) {};
\node[and gate US, draw, scale=1.4,transform shape] at ($ (inv1)+(1.3,0.2) $) (and1) {};

% output
\node at ($ (and1)+(1.5,0) $) (o) {$\Ysig$};

\draw[->] (del1.east) -- (inv1.input);
\draw[->] (inv1.output) -- (inv1.output -| and1.input 1);
\draw[->] (and1.output) -- (o);

% input
\node at ($ (and1.input 1) + (-4,0.1) $) (i) {$\En$};
\draw[->] (i) -- ++(4,0);

% port names

%\node at ($ (3,1) - (0,0.3) $) {$x$};
\fill[black] ($ (and1.output) + (0.3,0) $) circle (2pt);
\draw[->] ($ (and1.output) + (0.3,0) $) -- ++(0,-1.0) -- ++(-4.4,0) |- (del1.west);

% submodule names

\node at ($ (inv1) - (-0.1,0.5) $) {$\textsc{Inv}$};
\node at ($ (del1) - (0,0.5) $) {$\textsc{Chn}$};
\node at ($ (and1) - (0,0.7) $) {$\textsc{And}$};

\end{tikzpicture} 
\caption{Compound oscillator module.}\label{fig:osc} 
\end{figure}
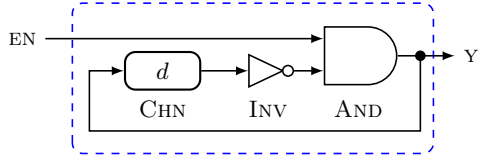

The module's behavior is characterized by the fact that it oscillates at its
  output~$\Ysig = \textsc{Chn in}$ while input~$\En$ is~1, and outputs a
  constant~$0$ while $\En$ is~0.
Up to time~$4$, the signal trace depicted in Figure~\ref{fig:exec} shows part of a
  correct execution of the oscillator module.

\definecolor{darkgreen}{rgb}{0,0.8,0}

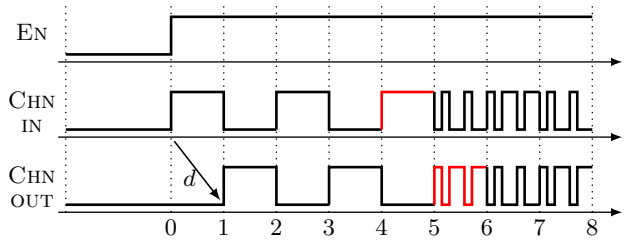
\begin{figure}[ht]
\centering
\begin{tikzpicture}[>=latex,semithick,scale=1.0,transform shape]

\tikzstyle{arrow}=[draw, -latex]

% axes
\draw[->,line width=0.5pt] (0,2) -- ++(7.5,0);
\draw[->,line width=0.5pt] (0,1) -- ++(7.5,0);
\draw[->,line width=0.5pt] (0,0) -- ++(7.5,0);

\node at ($ (-0.35,0.4)+(0,2) $) {\small \textsc{En}};

\node at ($ (-0.35,0.5)+(0,1) $) {\small \textsc{Chn}};
\node at ($ (-0.35,0.15)+(0,1) $) {\small \textsc{in}};

\node at (-0.35,0.5) {\small \textsc{Chn}};
\node at (-0.35,0.15) {\small \textsc{out}};

% times
\draw[-,line width=0.5pt,dotted] (0.1,0) -- ++(0,2.8);
\draw[-,line width=0.5pt,dotted] ($ (0.1,0)+2*(0.7,0) $) -- ++(0,2.8);
\draw[-,line width=0.5pt,dotted] ($ (0.1,0)+3*(0.7,0) $) -- ++(0,2.8);
\draw[-,line width=0.5pt,dotted] ($ (0.1,0)+4*(0.7,0) $) -- ++(0,2.8);
\draw[-,line width=0.5pt,dotted] ($ (0.1,0)+5*(0.7,0) $) -- ++(0,2.8);
\draw[-,line width=0.5pt,dotted] ($ (0.1,0)+6*(0.7,0) $) -- ++(0,2.8);
\draw[-,line width=0.5pt,dotted] ($ (0.1,0)+7*(0.7,0) $) -- ++(0,2.8);
\draw[-,line width=0.5pt,dotted] ($ (0.1,0)+8*(0.7,0) $) -- ++(0,2.8);
\draw[-,line width=0.5pt,dotted] ($ (0.1,0)+9*(0.7,0) $) -- ++(0,2.8);
\draw[-,line width=0.5pt,dotted] ($ (0.1,0)+10*(0.7,0) $) -- ++(0,2.8);

\node at ($ (0.1,0)+ 2*(0.7,0)-(0,0.2) $) {0};
\node at ($ (0.1,0)+ 3*(0.7,0)-(0,0.2) $) {1};
\node at ($ (0.1,0)+ 4*(0.7,0)-(0,0.2) $) {2};
\node at ($ (0.1,0)+ 5*(0.7,0)-(0,0.2) $) {3};
\node at ($ (0.1,0)+ 6*(0.7,0)-(0,0.2) $) {4};
\node at ($ (0.1,0)+ 7*(0.7,0)-(0,0.2) $) {5};
\node at ($ (0.1,0)+ 8*(0.7,0)-(0,0.2) $) {6};
\node at ($ (0.1,0)+ 9*(0.7,0)-(0,0.2) $) {7};
\node at ($ (0.1,0)+10*(0.7,0)-(0,0.2) $) {8};

% en signal
\draw[-,line width=1pt] ($ (0.1,2.1) + (0,0) $) --
   ++(1.4,0) -- ++(0,0.5) -- ++(5.6,0);

% signal chn in
\draw[-,line width=1pt] ($ (0.1,1.1) + (0,0) $) -- ++(1.4,0) --
   ++(0,0.5) -- ++(0.7,0) -- ++(0,-0.5) -- ++(0.7,0);
\draw[-,line width=1pt] ($ (1.5,1.1) + 2*(0.7,0) $) --
   ++(0,0.5) -- ++(0.7,0) -- ++(0,-0.5) -- ++(0.7,0);
\draw[-,line width=1pt,red] ($ (1.5,1.1) + 4*(0.7,0) $) --
   ++(0,0.5) -- ++(0.7,0);
\draw[-,line width=1pt] ($ (1.5,1.1) + 5*(0.7,0) + (0,0.5)$) --
   ++(0,-0.5) -- ++(0.1,0) --
   ++(0,0.5) -- ++(0.1,0) --
   ++(0,-0.5) -- ++(0.2,0) --
   ++(0,0.5) -- ++(0.1,0) --
   ++(0,-0.5) -- ++(0.2,0);
\draw[-,line width=1pt] ($ (1.5,1.1) + 6*(0.7,0) $) --
   ++(0,0.5) -- ++(0.1,0) --
   ++(0,-0.5) -- ++(0.1,0) --
   ++(0,0.5) -- ++(0.2,0) --
   ++(0,-0.5) -- ++(0.1,0) --
   ++(0,0.5) -- ++(0.2,0);
\draw[-,line width=1pt] ($ (1.5,1.1) + 7*(0.7,0) + (0,0.5)$) --
   ++(0,-0.5) -- ++(0.1,0) --
   ++(0,0.5) -- ++(0.1,0) --
   ++(0,-0.5) -- ++(0.2,0) --
   ++(0,0.5) -- ++(0.1,0) --
   ++(0,-0.5) -- ++(0.2,0);

% signal chan out
\draw[-,line width=1pt] ($ (0.1,0.1) + (0,0) $) -- ++(2.1,0) --
   ++(0,0.5) -- ++(0.7,0) -- ++(0,-0.5) -- ++(0.7,0);
\draw[-,line width=1pt] ($ (1.5,0.1) + 3*(0.7,0) $) --
   ++(0,0.5) -- ++(0.7,0) -- ++(0,-0.5) -- ++(0.7,0);
\draw[-,line width=1pt,red] ($ (1.5,0.1) + 5*(0.7,0) $) --
   ++(0,0.5) -- ++(0.1,0) --
   ++(0,-0.5) -- ++(0.1,0) --
   ++(0,0.5) -- ++(0.2,0) --
   ++(0,-0.5) -- ++(0.1,0) --
   ++(0,0.5) -- ++(0.2,0);
\draw[-,line width=1pt] ($ (1.5,0.1) + 6*(0.7,0) + (0,0.5)$) --
   ++(0,-0.5) -- ++(0.1,0) --
   ++(0,0.5) -- ++(0.1,0) --
   ++(0,-0.5) -- ++(0.2,0) --
   ++(0,0.5) -- ++(0.1,0) --
   ++(0,-0.5) -- ++(0.2,0);
\draw[-,line width=1pt] ($ (1.5,0.1) + 7*(0.7,0) $) --
   ++(0,0.5) -- ++(0.1,0) --
   ++(0,-0.5) -- ++(0.1,0) --
   ++(0,0.5) -- ++(0.2,0) --
   ++(0,-0.5) -- ++(0.1,0) --
   ++(0,0.5) -- ++(0.2,0);

\draw[->,shorten >=2pt, shorten <=2pt] ($ (0.1,1.0) + 2*(0.7,0) $) --
   ($ (0.1,0.1) + 3*(0.7,0) $) node[pos=0.35,below] {$d$};

\end{tikzpicture} 
\caption{Execution of oscillator module with transient channel fault
(signal mismatch marked red).}
\label{fig:exec} 
\end{figure}% is here to be in the right column

The same conceptual design of a negative feedback-loop with delay
  was used by Stricker et al.~\cite{SCBMTH08} in the context of genetic circuits, for
  synthesizing a genetic oscillator in \emph{Escherichia coli} with an output period in the
  order of an hour.
Figure~\ref{fig:osc_gen} depicts the genetic design of the feedback-loop of the simplified (second) design
   proposed by Stricker et al.
It shows the DNA segement that is introduced into the bacterial host.
The DNA comprises of a promoter (bold arrow in the figure) and a downstream lacI gene
  (flanked by a ribosome binding site and a terminator that are not shown for simplicity).
The lacI gene is transcribed and translated into LacI protein.
The promoter is activated if no inhibiting LacI proteins are present (shown as
  an inhibitory arrow from lacI to the promoter) and
  externally introduced IPTG molecules are present (not shown in the figure, and assumed to be
  present throughout).
The activation of the promoter leads to transcription and subsequent translation of the downstream
  lacI gene, resulting in increasing LacI protein levels, which then inactivate the promoter.
Only when the concentration of the LacI protein has fallen to a sufficiently low level due to degradation and dilution,
  the promoter becomes active again.
The result is an oscillation of the LacI protein concentration.

\tikzset{middlearrow/.style={
        decoration={markings,
            mark= at position 0.5 with {\arrow{#1}} ,
        },
        postaction={decorate}
    }
}

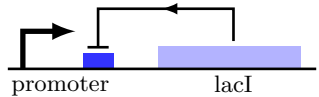
\begin{figure}[b]
\centering
\begin{tikzpicture}[>=latex,semithick,scale=1.0,transform shape]

\tikzstyle{arrow}=[draw, -latex]

% promoter
\draw[->,line width=2pt] (0,0) -- ++(0,0.5) -- ++(0.7,0);
\draw[draw=none,fill=blue!80,line width=1pt] (0.8,0) rectangle (1.2,0.2); 

% gene
\draw[draw=none,fill=blue!30,line width=1pt] (1.8,0) rectangle ($ (2.2,0.3) + (1.5,0) $); 

% base line
\draw[line width=1pt] (-0.2,0) -- (3.9,0);

% flow control
\draw[->,>=|,shorten >=2pt,shorten <=2pt,line width=1pt,middlearrow={latex}] (2.8,0.3) -- ++(0,0.5) -- ++(-1.8,0)
                                                   -- ++(0,-0.6);

\node[xshift=-0.5cm] at (1,-0.23) {\small promoter};
\node at (2.8,-0.2) {\small lacI};

\end{tikzpicture} 
\caption{Negative feedback loop of a genetic oscillator presented by Stricker et al.\ \protect\cite{SCBMTH08}.}
\label{fig:osc_gen} 
\end{figure} 

\noindent\textit{A note on state-oblivious specifications: the absence of
  an initial state.}
In the previous section, we have argued that, when specifying the channel input-output
  behavior in a state-oblivious way, we resort to input output signals as
  functions $\R \to [0,1]$.
In this section, we followed this approach in Definitions~\ref{def:comp_module}
  and \ref{def:module}.
We will later see (in Section~\ref{sec:transientfaults}) that such specifications
are also well-suited for specifying so-called self-stabilizing
  systems.

However, state-oblivious specifications also introduce difficulties that
lead to open research questions. More specifically, a useful vehicle 
for showing that some module implements another one in classical
  state-based frameworks is by induction on a sequence of input events, starting 
from some initial state.
Simulation and bi-simulation relations are proved this way, with implications
  on what can be said about using one module instead of the other.
These proof techniques fail in our case, however, since signals are defined
  on the time domain $\R$, without an initial time and ``state''.
While one can argue that induction from a common time, say~$0$, can be done
  into the positive an negative direction, questions about what this means
  for one module replacing/implementing another are open.

\section{Modeling Permanent Faults}
\label{sec:permanentfaults}

Viewed from an outside perspective, all that a faulty module can do is deviating from its
specification. Depending on the type of faults considered, there may or
may not be constraints on this deviation. In other words, a faulty module~$M$
simply follows a \emph{weaker} module specification than $\phi_M$, i.e., some
module specification $\bar{\phi}_M$ such that 
%\begin{equation*}
$\forall (\In_p)_{p\in I(M)}: \bar{\phi}_M\left((\In_p)_{p\in
I(M)}\right)\supseteq \phi_M\left((\In_p)_{p\in I(M)}\right)$.
%\end{equation*}

\begin{definition}[Crash and Byzantine fault types]
For the fault type \emph{crash faults}~\cite{Fischer83},
a faulty component simply ceases to operate at some
point in time. In this case, $\bar{\phi}_M((\In_p)_{p\in I(M)})$ can be
constructed from $\phi_M((\In_p)_{p\in I(M)})$ by adding, for each
$(\Out_p)_{p\in O(M)}\in \phi_M((\In_p)_{p\in I(M)})$ and each $t\in \R$, the
output signal
\begin{equation*}
\left(t' \in \R \mapsto \begin{cases}
\Out_p(t') & \mbox{if } t'< t\\
\Out_p(t) & \mbox{else }
\end{cases}
\right)_{p\in O(M)}
\end{equation*}
to $\bar{\phi}_M((\In_p)_{p\in I(M)})$.
This just keeps (``stuck-at'') the last output value before the crash.

The fault-type of \emph{Byzantine faults}~\cite{PSL80} is
even simpler to describe: The behavior of a faulty module is
  arbitrary, i.e., $\bar{\phi}_M$ is the constant function returning the set of
  \emph{all} possible output signals, irrespective of the input signal.
\end{definition}

Having defined a faulty type $\bar{\phi}_S$ for a module $S$ accordingly,
it seems obvious how to define a fault-tolerant compound
module: A compound module $M$ with submodules $S_M$ tolerates failures
of a subset
$F\subset S_M$, iff $\phi_M=\bar{\phi}_{M,F}$, where $\bar{\phi}_{M,F}$ is the
specification of the compound module in which we replace each submodule $S\in F$
by the one with specification $\bar{\phi}_S$. 

\begin{example}[Fault-tolerant 1-bit adder module]
Figure~\ref{fig:add} shows an example of a 1-bit adder module.
It is built from three (zero-time) 1-bit adder
  submodules, a (zero-time) majority voter, and FIFO channels with maximal
  delay~$d$ connecting the module's inputs to the adder submodules.
The channels account for the module's propagation delay and potentially desynchronized
  arrivals of input transitions at the submodules.
If the module inputs have been stable for~$d$ time, however, its output yields the sum of
  the two inputs, tolerating failure of any one of its three adder submodules and the associated input
  channels.
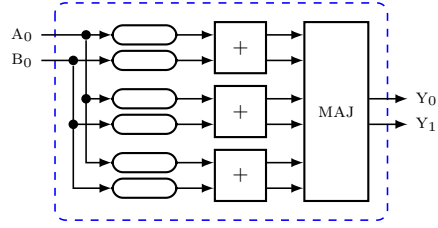
\begin{figure}[ht]
\centering
\begin{tikzpicture}[>=latex,semithick,scale=0.85,transform shape]

\tikzstyle{arrow}=[draw, -latex]

% boundary
\draw [rounded corners,color=blue,dashed] (-1.4,0.5) rectangle (3.8,-2.9);

% submodules
\node[thick,rectangle,draw=black,minimum width=1cm,minimum height=0.3cm, rounded corners] at (0, 0) (del1) {};
\node[thick,rectangle,draw=black,minimum width=1cm,minimum height=0.3cm, rounded corners] at (0,-0.4) (del2) {};
\node[thick,rectangle,draw=black,minimum width=1cm,minimum height=0.3cm, rounded corners] at (0,-1) (del3) {};
\node[thick,rectangle,draw=black,minimum width=1cm,minimum height=0.3cm, rounded corners] at (0,-1.4) (del4) {};
\node[thick,rectangle,draw=black,minimum width=1cm,minimum height=0.3cm, rounded corners] at (0,-2) (del5) {};
\node[thick,rectangle,draw=black,minimum width=1cm,minimum height=0.3cm, rounded corners] at (0,-2.4) (del6) {};

\node[thick, rectangle,draw=black,minimum width=0.8cm,minimum height=0.8cm] at ($ (del1) + (1.5,-0.2)$) (add1) {+};
\node[thick, rectangle,draw=black,minimum width=0.8cm,minimum height=0.8cm] at ($ (del3) + (1.5,-0.2)$) (add2) {+};
\node[thick, rectangle,draw=black,minimum width=0.8cm,minimum height=0.8cm] at ($ (del5) + (1.5,-0.2)$) (add3) {+};

\node[thick, rectangle,draw=black,minimum width=1cm,minimum height=2.8cm] at ($ (add2) + (1.5,0)$) (vote1) {\textsc{maj}};

%\node[and gate US, draw, scale=1.4,transform shape] at ($ (inv1)+(1.3,0.2) $) (and1) {};

% output
%\node at ($ (and1)+(1.5,0) $) (o) {$\Ysig$};

\draw[->] (del1.east) -- (del1.east -| add1.west);
\draw[->] (del2.east) -- (del2.east -| add1.west);
\draw[->] (del3.east) -- (del3.east -| add2.west);
\draw[->] (del4.east) -- (del4.east -| add2.west);
\draw[->] (del5.east) -- (del5.east -| add2.west);
\draw[->] (del6.east) -- (del6.east -| add2.west);

\draw[->] (del1.east -| add1.east) -- (del1.east -| vote1.west);
\draw[->] (del2.east -| add1.east) -- (del2.east -| vote1.west);
\draw[->] (del3.east -| add2.east) -- (del3.east -| vote1.west);
\draw[->] (del4.east -| add2.east) -- (del4.east -| vote1.west);
\draw[->] (del5.east -| add3.east) -- (del5.east -| vote1.west);
\draw[->] (del6.east -| add3.east) -- (del6.east -| vote1.west);

% output
\draw[->] (del3.east -| vote1.east) -- ++(0.6,0) node (y0) {};
\draw[->] (del4.east -| vote1.east) -- ++(0.6,0) node (y1) {};

\node at ($ (y0) + (0.3,0) $) {$\Ysig_0$};
\node at ($ (y1) + (0.3,0) $) {$\Ysig_1$};

% input
\draw[<-] (del1.west) -- ++(-1.1,0) node (x0) {};
\draw[<-] (del2.west) -- ++(-1.1,0) node (x1) {};

\node at ($ (x0) + (-0.3,0) $) {$\Asig_0$};
\node at ($ (x1) + (-0.3,0) $) {$\Bsig_0$};

\node[fill,black,inner sep=1.5pt,circle] at ($ (del1.west) + (-0.4,0) $) (circ1) {};
\node[fill,black,inner sep=1.5pt,circle] at ($ (del2.west) + (-0.6,0) $) (circ2) {};
\draw[->] (circ1) |- (del3.west);
\draw[->] (circ2) |- (del4.west);

\node[fill,black,inner sep=1.5pt,circle] at (circ1 |- del3.west) (circ3) {};
\node[fill,black,inner sep=1.5pt,circle] at (circ2 |- del4.west) (circ4) {};
\draw[->] (circ1 |- del3.west) |- (del5.west);
\draw[->] (circ2 |- del4.west) |- (del6.west);

%\node at ($ (and1.input 1) + (-4,0.1) $) (i) {$\En$};
%\draw[->] (i) -- ++(4,0);

% port names

%\node at ($ (3,1) - (0,0.3) $) {$x$};
%\fill[black] ($ (and1.output) + (0.3,0) $) circle (2pt);
%\draw[->] ($ (and1.output) + (0.3,0) $) -- ++(0,-1.0) -- ++(-4.4,0) |- (del1.west);

% submodule names

%\node at ($ (inv1) - (-0.1,0.5) $) {$\textsc{Inv}$};
%\node at ($ (del1) - (0,0.5) $) {$\textsc{Chn}$};
%\node at ($ (and1) - (0,0.7) $) {$\textsc{And}$};

\end{tikzpicture} 
\caption{Fault-tolerant 1-bit adder.}\label{fig:add} 
\end{figure}
\end{example}

While this definition of a fault-tolerant compound module can be useful, it is very restrictive.
For instance, our adder compound module cannot tolerate a failure of the majority
  voter that computes the output.
More generally, no matter how a compound module~$M$ is constructed, it can never tolerate
  even a single crash failure of
  an arbitrary submodule $S$, unless $M$ is trivial:
  If $S$ has an output port in common with $M$, i.e., if $S$ generates
  this output for $M$, the only possible guarantee $M$
  could make for this output port is a fixed output value at all times,
  as this is what the crash of $S$ would lead to. 

To address this issue, we introduce the concept of a
  \emph{fault-tolerant implementation} of a module.

\begin{definition}[Fault-tolerant implementation]
We say that module $M$ \emph{implements} module $M'$ iff
\begin{equation*}
  \forall (\In_p)_{p\in I(M)}: \phi_M\left((\In_p)_{p\in
  I(M)}\right)\subseteq \phi_{M'}\left((\In_p)_{p\in I(M')}\right)\,.
\end{equation*}
This requires that $I(M)=I(M')$ and $O(M)=O(M')$.
Similarly, for a given fault type $\bar{\cdot}$, $M$ is an \emph{implementation
  of $M'$ that tolerates failures of $F\subset S_M$} iff
\begin{equation*}
  \forall (\In_p)_{p\in I(M)}: \bar{\phi}_{M,F}\left((\In_p)_{p\in
  I(M)}\right)\subseteq \phi_{M'}\left((\In_p)_{p\in I(M')}\right)\,,
\end{equation*}
  where $\bar{\phi}_{M,F}$ is defined according to the fault type.
Finally, $M$ is an \emph{$f$-tolerant implementation of $M'$}, iff it tolerates
  faults of $F\subset S_M$ for any $F$ satisfying $|F|\leq f$.
\end{definition}

\begin{example}[Fault-tolerant adder]
For an adder implementation that is $1$-toler\-ant to Byzantine faults (and thus
also any other fault type), \emph{triple-modular redundancy} (TMR) can be used.
Here, not just the adders, but also the pair of input and output signals is
triplicated. Moreover, the single majority voter at the adder outputs is
replaced by three majority voters at the adder inputs: Since they vote on
the replicated input signals, we can guarantee that all three adders receive
identical inputs if no voter fails, whereas two adders receive identical inputs
and produce identical outputs if one voter is faulty. Note that relaxing the
specification and using an implementation relation is necessary here, as
otherwise the same reasoning as before would prevent a $1$-tolerant solution.
\end{example}

\pgfdeclareimage[width=8.5cm]{dartsnode}{DARTS_ASIC_node}

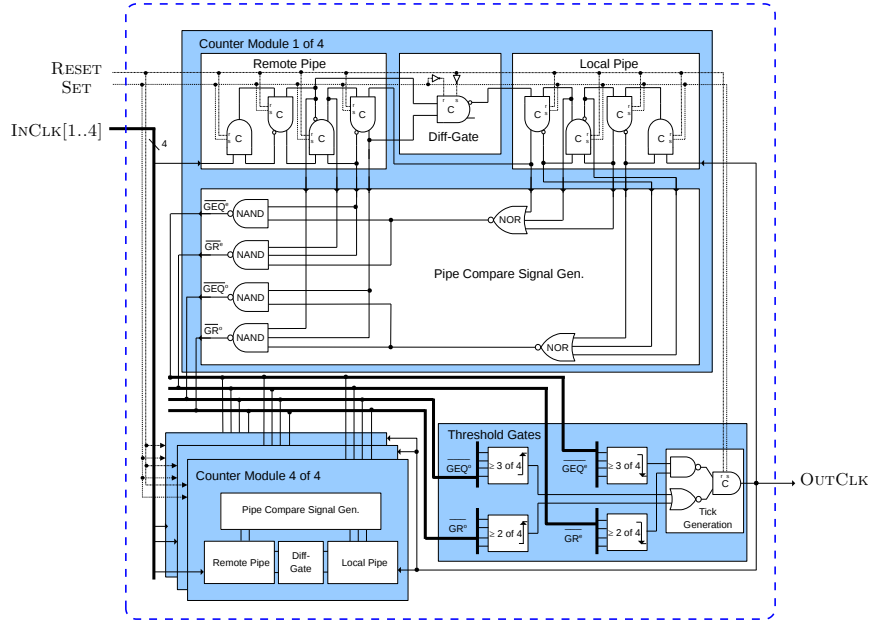
\begin{figure*}[ht]
\centering
\begin{tikzpicture}[>=latex,semithick,scale=1.31,transform shape]

\tikzstyle{arrow}=[draw, -latex]

% darts node
\node at (0,0) (darts) {\pgfuseimage{dartsnode}};

% boundary
\draw [rounded corners,color=blue,dashed] 
  ($ (darts.south west) + (0.7,0.1) $) rectangle ($ (darts.north east) + (-1.45,0.1) $);

% in/output
\node at ($ (darts.south west) + (0.7,0.1) + (-0.5,5.6) $) {\tiny \textsc{Reset}};
\node at ($ (darts.south west) + (0.7,0.1) + (-0.5,5.4) $) {\tiny \textsc{Set}}; 
\node at ($ (darts.south west) + (0.7,0.1) + (-0.7,4.95) $) {\tiny \textsc{InClk[1..4]}};
\node at ($  (darts.north east) + (-1.45,0.1) + (0.6,-4.83) $) {\tiny \textsc{OutClk}};

\end{tikzpicture} 
\caption{Node of the fault-tolerant DARTS oscillator.}\label{fig:darts} 
\end{figure*}

The problem of developing a fault-tolerant implementation of an oscillator was
addressed in the DARTS project~\cite{FS11:JECE,FS12}: Using a predecessor of the
proposed modeling framework, it was shown that a circuit comprising~$3f+1$ tick generator
nodes, in which the output of each node is fed back as input to all nodes
(including the node itself), tolerates up to~$f$ Byzantine faulty
nodes. The circuitry of a single tick generator node for $n=4$ and $f=1$
is depicted in Figure~\ref{fig:darts}.
Informally, it counts the difference of clock transitions generated by itself (Local
Pipe) and those received from other nodes (Remote Pipe) by means of Counter
Modules implemented via elastic pipelines \cite{Sutherland89}. 
When sufficiently many nodes (not all, since there might be a fault) are not too
far behind (determined by the Threshold Gates), it generates a new local clock
transition (Tick Generation).

\section{Modeling Transient Faults and Self-Stabilization}
\label{sec:transientfaults}

\emph{Transient faults} are assumed to be temporary in nature, in the sense that
  the \emph{cause} of the fault eventually vanishes.
Suitable recovery techniques can hence be used to resume correct operation later on.
The most extreme requirement is \emph{self-stabilization}~\cite{dijkstra74}, where the
  system must eventually resume correct operation
  even after all of its components experienced arbitrary transient faults.
Since the latter results in arbitrary states, this is equivalent to
  requiring that the system re-establishes correct operation from arbitrary initial
  states in finite time.
The maximal time it may take to do so is called \emph{stabilization time}.

Self-stabilization plays a crucial role in mission-critical systems, where
  even the assumption that a certain fraction (e.g., less than a third, as in
  DARTS) of the subcomponents can fail is too optimistic, 
  or for applications that cannot afford the amount of redundancy needed
  for fully masking faults.
Unsurprisingly, self-stabilization and related concepts also play a vital role in biological
  systems.
For example, Albert and Othmer~\cite{AO03} modeled part of the
  control circuit that regulates gene expression in the fruit fly
  Drosophila Melanogaster by a binary circuit, and observed that a
  considerable number of initial circuit states finally lead to the
  wild-type stable state.
For the lobster heart, it has been established that it is,
  in essence, self-stabilizing even in the presence of ongoing Byzantine
  behavior~\cite{DDP03:SSS,sivan99lobster}.

Transferring the idea of self-stabilization to our state-oblivious framework
  requires some effort, but we will demonstrate that it can be integrated
  very well.
Most notably, there is no notion of a state (besides from trivial states),
which means that we cannot define self-stabilization in the conventional
  state-based manner.

The first important observation underlying input-output functions in a state-oblivious
  specification of self-stabilization is that, since a basic module specification
  describes the desired behavior \emph{from the viewpoint of an external observer},
  we consider a module correct even if it merely \emph{seems} to be operating correctly in
  terms of its input-output behavior.
In other words, it does not matter whether the module internally operates as intended,
  as long as it produces correct results.

Second, when defining basic modules (Definition~\ref{def:module}), we
  only resorted to input and output signals from $\R \to [0,1]$.
The input-output function $\phi_M$ of a module~$M$ then maps input
  signals to allowed output signals.
For modules that are intended to be self-stabilizing (or that suffer from transient faults),
  this is not anymore convenient since they are not either correct or faulty during all
  of the execution.
Merely, we would like to define how they should behave if they
  were \emph{correct during a time interval $[t^-,t^+]$}.

\noindent\textit{Redefining correctness for transient faults.}
An immediate solution to this is to define $\phi_M$ on all
signal restrictions to all sub-intervals $I=[t^-,t^+] \subseteq \R$.
To make such interval-restrictions explicit, we will sometimes write
  $\sigma^I$, $\In^I$, $\Out^I$, $\E^I$ etc.
Note that such intervals $I$ could also be open $(t^-,t^+)$, closed
  $[t^-,t^+]$, or half-open, but are always contiguous. 

  \begin{definition}[Basic module---interval-restricted specification]
An in\-terval-restricted execution $E^I$ of a basic module is correct during
  $I=[t^-,t^+]$ if its interval-restricted output signals $\Out^I$
  are within the image $\phi_M(\In^I)$ of its interval-restricted input
  signals.
\end{definition}
We termed this the \emph{interval-restricted specification}, since it requires the
  definition of $\phi_M$ on all these sub-intervals.

However, care has to be taken: the input-output function $\phi_M$ has to be
  restricted to ensure that it adheres
  to an intuitive notion of correctness.
For example, we expect an execution of $M$ that is correct
  within $[t^-,t^+]$ to be also correct within all subintervals
  of $[t^-,t^+]$.
One possibility is to add all those restrictions explicitly.
Indeed, such specifications are powerful in
  expressiveness~\cite{dolev14}, but at the same time 
  our experience in the early stages of~\cite{dolev14} was that using this approach is
  tedious for simple modules, and practically guarantees mistakes for complex modules.
The reason for this is that the subset-closedness of the correctness definition
  is easily violated in an interval-restricted specification even for simple modules like
  channels.

We thus primarily resort to another definition, which does not change the
  domain of~$\phi_M$, i.e, where the domain of all signals is $\R$,
  which we call a \emph{definition by extension}.
With this definition, correctness is subset-closed for time intervals by construction
  (see \cite[Lem.~3.3]{dolev14}), such that the natural notion of correctness
  is also guaranteed by construction.

\begin{definition}[Basic module, correct during time interval---definition by extension]
An interval-restricted execution $E^I$ of a basic module $M$ is correct during $I=[t^-,t^+]$, iff
  there is a (complete, i.e., with time domain $\R$)
  execution $E'$ of a basic module $M$ such that:
(i) the input and output signals of $E^I$ and $E'$ are identical during $[t^-,t^+]$, and
(ii) for execution~$E'$, letting $i$ be the input signals and $o$ the output signals of $E'$,
  $o \in \phi_M(i)$.
\end{definition}

If not stated otherwise, we will resort to the definition by extension for basic modules.
For ease of notation, we will, however, extend $\phi_M$ to input and output
  signals with time domains that are sub-intervals $[t^-,t^+]$ of $\R$ in the following:
Writing $\Out^I \in \phi_M(\In^I)$ where $\In^I,\Out^I$ have time domain $I=[t^-,t^+]$
  is just a short-hand notation for: For any execution $E$ of $M$ that behaves according to $\In^I,\Out^I$
  during $I$, basic module $M$ is correct during $I$.

\begin{definition}[Extendible module]
\label{def:extendible}
We say that a basic module is \emph{extendible}, if its input-output function $\phi_M$
  has the properties of a definition by extension.
That is, executions that are correct on a subinterval can be extended
  to executions that are correct on $\R$.
\end{definition}

The above definition of correctness introduced above also implies that a sub-execution
  on some interval $[t^-,t^+]\subset \R$ that is considered correct can be extended to a
  (complete) correct execution on $\R$.
This is natural for basic modules, but inappropriate for
  self-stabilizing compound modules: these take the role of algorithms,
  and making this a requirement would be equivalent to disallowing transient
  faults---or, more precisely, to implicitly turn them into persistent faults.
To illustrate this issue, consider again the compound module implementing the
  oscillator shown in Figure~\ref{fig:osc} and the signal trace shown in
  Figure~\ref{fig:exec}:
The execution segment during time interval~$[6,8]$ must
  be considered correct, since it fulfills all input-output constraints of the
  involved circuit components during this interval.
We know, however, that such a high-frequency oscillation can never occur in a
  (complete) correct execution of the compound module on $\R$,
  for which the only possible oscillator frequency is one transition per time
  unit.

To allow for a meaningful notion of self-stabilization, we will hence treat
  compound modules differently: When analyzing their stabilizing behavior,
  we assume that all sub-modules themselves operate correctly, whereas the 
  ``convergence'' of the compound module's externally visible behavior to a correct
  one must be enforced.
This is captured by defining correct executions of compound
  modules on time intervals $[t^-,t^+]\subset \R$ by the same process as in
  Definition~\ref{def:comp_module}, except that we replace $\R$ by $[t^-,t^+]$.

\begin{definition}[Compound module---interval-restriced specification]\ \\
For any $I=[t^-,t^+]$ and any interval-restricted input signal
$(\In_p^I)_{p\in I(M)}$, we require that the interval-restricted output signal
$(\Out_p^I)_{p\in O(M)}\in \phi_M((\In_p^I)_{p\in I(M)})$ iff
there exist interval-restricted input and output signals $(f_p^I)_{p\in \bigcup_{S\in S_M} I(S)\cup O(S)}$
for all submodules $S$ of $M$ so that all the properties below hold:
\begin{compactitem}
\item $\forall S\in S_M: (f_p^I)_{p\in O(S)} \in \phi_S((f_p^I)_{p\in I(S)})$
\item $\forall p\in I(M): f_p=\In_p^I$
\item $\forall p\in O(M): f_p=\Out_p^I$
\end{compactitem}
\end{definition}

Note that this definition is recursive; we can iteratively extend all module
  specifications to inputs on arbitrary intervals $[t^-,t^+]\subseteq \R$,
  starting from the specifications of basic modules for inputs on $\R$.

With these definitions in place, we can now proceed to defining a suitable
  notion of self-stabilization in our framework.

\begin{definition}[Self-stabilizing implementation]\label{def:selfstab}
A module $M$ is called a
\emph{$T$-stabilizing implementation of module $M'$}, iff $I(M)=I(M')$,
$O(M)=O(M')$ and, for all
$I=[t^-,t^+]\subseteq \R$ with $t^+\geq t_-+T$, $I'=[t^-+T,t^+]$ and each
$(\Out_p^I)_{p\in O(M)}\in  \phi_M\left( (\In_p^I)_{p\in I(M)} \right)$,
it holds that 
\[
(\Out_p^{I'})_{p\in O(M)} \in  \phi_{M'}\left( (\In_p^{I'})_{p\in I(M')} \right).
\]
Informally, cutting off the first $T$ time units from any interval-restricted
execution of $M$ must yield a correct interval-restricted execution of $M'$.

Module $M$ is a \emph{self-stabilizing implementation of $M'$}, iff
  it is a $T$-stabilizing implementation of $M'$ for some $T<\infty$.
\end{definition}

\begin{example}[Self-stabilization]
According to \cref{def:selfstab}, the oscillator implementation from
  Figure~\ref{fig:osc} is not self-stabilizing,
  as illustrated by Figure~\ref{fig:exec}:
After a transient fault of the channel component during time~$[5,6]$, all circuit components
  operate correctly again from time~$6$ on, but the behavior of circuit
  output $\Ysig = \textsc{Chn in}$ never returns to the behavior
  of~$\Ysig$ that could be observed in an execution on~$\R$.

For a positive example, recall the 1-bit adder depicted in Figure~\ref{fig:add}.
Its self-stabilization properties follow, without the need of a custom analysis,
  from a general principle (called forgetfulness), which will be introduced in the
  next section.
\end{example}

\section{Example: A Self-Stabilizing Oscillator}
\label{sec:SSOsc}

In view of the state-obliviousness and generality of our modeling framework,
  one might ask whether it indeed allows to derive meaningful
  results.
As a proof of concept, we will thus elaborate more on 
  self-stabilizing compound modules.

First, we will formalize the statement that if a compound module~$M$ is
  made up of submodules~$S$ whose output at time $t$ depends only on the input
  during $[t-T_S,t]$ (for $T_S\in \R_0^+$) and contains no feedback-loops
  (like, e.g., the adder in Figure~\ref{fig:add}, but unlike the oscillator in
  Figure~\ref{fig:osc}), then $M$ is self-stabilizing.
Interestingly, 
  this result sometimes \emph{does} also apply to systems that do have internal
  feedback loops; this
  holds true whenever we can contain the loop in a submodule and (separately) show
  that it is self-stabilizing.
\begin{definition}[Forgetfulness]\label{def:forgetfulness}
For $F \geq 0$, module $M$ is \emph{$F$-forgetful} iff:
\begin{compactenum}
  \item For any $I=[t^-,t^+] \subseteq \R$ with $t^+ \geq t^-+F$, pick any
  interval-restricted output $( \Out_p^I)_{p\in O(M)} \in  \phi_M\left( (\In_p^I)_{p\in I(M)} \right)$.

  \item For each input port $p\in I(M)$, pick any input signal
    $\In_p':\R\to \{0,1\}$ so that $\In_p'$
    restricted to $I$ equals $\In_p^I$.

  \item Then
  %\begin{align*}
  $\left( \Out_p':\R\to \{0,1\} \right)_{p\in O(M)} \in
  \phi_M\left( (\In_p':\R\to \{0,1\})_{p\in I(M)}) \right)$ %\,
  %\end{align*}
  exists so that for all output ports $p\in O(M)$ the restrictions of
    $\Out_p$ and $\Out_p'$ to the interval $[t^-+F,t^+]$ are equal.
\end{compactenum}
\end{definition}
In other words, the output of a $F$-forgetful module during $[t^-+F,t^+]$ reveals no
  information regarding the input during $(-\infty,t^-)$.

\begin{example}  
A simple example of a $d$-forgetful module is a FIFO channel with maximum
  delay~$d$.
\end{example}

\begin{definition}[Feedback-free module]\label{def:feedbackfree}
Let the \emph{circuit graph} of a compound module $M$ be the directed graph
  whose nodes are the submodules $S_M$ of $M$,
  and for each output port $p$ of $S\in S_M$ that is an input port of
  another submodule $S'\in S_M$, there is a directed edge
  from $S$ to $S'$.
We say~$M$ is \emph{feedback-free} iff all its submodules
  are forgetful and its circuit graph is acyclic.
\end{definition}

One can then show that feedback-free compound modules made up
  of forgetful submodules are self-stabilizing:

\begin{theorem}[\cite{dolev14}, Theorem 3.7]\label{theorem:forgetful}
Given a feedback-free compound module $M$, denote by ${\cal P}$ the set of paths
  in its circuit graph.
Suppose that each submodule $S\in {\cal S}_M$ is $F_S$-forgetful for
  some $F_S\in \R^+_0$.
Then, $M$ is $F$-forgetful with
\begin{equation*}
F = \max_{(S_1,\ldots,S_k)\in {\cal P}}\left\{\sum_{i=1}^k F_{S_i}\right\}\,.
\end{equation*}
\end{theorem}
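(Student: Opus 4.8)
The plan is to prove $F$-forgetfulness of $M$ by induction along a topological order of the acyclic circuit graph, applying the forgetfulness of each submodule on a suitably shrunk interval so that the per-submodule forgetting windows accumulate exactly along paths. For each submodule $S\in\mathcal{S}_M$, write $G_S$ for the maximum of $\sum_{i=1}^k F_{S_i}$ taken over all paths $(S_1,\dots,S_k)$ in the circuit graph that end in $S_k=S$, and set $H_S := \max\bigl(\{0\}\cup\{G_{S'}\mid S'\to S\}\bigr)$, the largest such sum over the predecessors of $S$. Then $G_S = H_S + F_S$, and since every path ends at some node, $F = \max_{S\in\mathcal{S}_M} G_S$.

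Fix an interval $I=[t^-,t^+]$ with $t^+\geq t^-+F$, an output $(\Out_p^I)_{p\in O(M)}\in\phi_M((\In_p^I)_{p\in I(M)})$, and new inputs $\In_p'\colon\R\to\{0,1\}$ agreeing with $\In_p^I$ on $I$. Unfolding \cref{def:comp_module} in its interval-restricted form yields interval-restricted signals $(f_p^I)_p$ on all submodule ports that realise this execution, in particular $(f_p^I)_{p\in O(S)}\in\phi_S((f_p^I)_{p\in I(S)})$ for every $S$, with $f_p^I=\In_p^I$ on external inputs and $f_p^I=\Out_p^I$ on external outputs. I will construct full signals $f_p'\colon\R\to\{0,1\}$ on all ports, processed in topological order, maintaining the invariant that for every submodule $S$ the signals on its ports satisfy $(f_p')_{p\in O(S)}\in\phi_S((f_p')_{p\in I(S)})$ (now as a complete execution on $\R$), equal $\In_p'$ on external input ports, and agree with $f_p^I$ on the output ports $O(S)$ throughout $[t^-+G_S,t^+]$.

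For the inductive step at $S$, first observe that every input port of $S$ already carries a signal $f_p'$ that agrees with $f_p^I$ on $[t^-+H_S,t^+]$: external input ports agree with $\In_p^I=f_p^I$ on all of $I\supseteq[t^-+H_S,t^+]$, while a port fed by a predecessor $S'$ agrees with $f_p^I$ on $[t^-+G_{S'},t^+]\supseteq[t^-+H_S,t^+]$ by the induction hypothesis, using $G_{S'}\leq H_S$. Put $J:=[t^-+H_S,t^+]$; its length is at least $F_S$, because $t^+\geq t^-+F\geq t^-+G_S=t^-+H_S+F_S$. By subset-closedness of correctness under the definition by extension (\cite[Lem.~3.3]{dolev14}), restricting the correct execution $(f_p^I)_{p\in O(S)}\in\phi_S((f_p^I)_{p\in I(S)})$ to $J$ is again correct, so I may invoke the $F_S$-forgetfulness of $S$ on the interval $J$, taking the already-constructed inputs $(f_p')_{p\in I(S)}$ (which match the old inputs on $J$) as the new input. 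This produces output signals $(f_p')_{p\in O(S)}$ on $\R$ with $(f_p')_{p\in O(S)}\in\phi_S((f_p')_{p\in I(S)})$ that coincide with $f_p^I$ on $[(t^-+H_S)+F_S,t^+]=[t^-+G_S,t^+]$, re-establishing the invariant; the case of a source submodule ($H_S=0$) is covered by the same argument.

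Once all submodules have been handled, the collection $(f_p')_p$ is by construction a complete execution of every submodule, is consistent across shared wires (each internal port is the output of a unique submodule, processed before any submodule reading it), and equals $\In_p'$ on $I(M)$; hence by \cref{def:comp_module} the tuple $(\Out_p')_{p\in O(M)}:=(f_p')_{p\in O(M)}$ lies in $\phi_M((\In_p')_{p\in I(M)})$. For an external output port $p\in O(M)$, owned by some submodule $S$, the invariant gives $\Out_p'=f_p^I=\Out_p^I$ on $[t^-+G_S,t^+]$, and since $G_S\leq F$ this holds in particular on $[t^-+F,t^+]$, which is exactly the agreement required by \cref{def:forgetfulness}. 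The main work, and the only place where care is genuinely needed, will be the interval bookkeeping: one must check that the matching windows $[t^-+G_S,t^+]$ shrink along edges precisely so that each submodule's inputs still agree on an interval of length at least $F_S$, and that subset-closedness legitimises restricting the witnessing internal execution to these windows before forgetfulness is applied, with acyclicity of the circuit graph making the induction well-founded.
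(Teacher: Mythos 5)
Your proof is correct and follows essentially the standard argument for this theorem (which the paper itself does not reprint but imports from \cite{dolev14}): induction along a topological order of the acyclic circuit graph, with the per-path accumulation of the forgetting windows tracked by $G_S=H_S+F_S$ and the final bound $F=\max_S G_S$. The one step worth flagging is that the subset-closedness you invoke before applying forgetfulness of $S$ on $J\subseteq I$ is asserted in the paper only for basic modules (definition by extension); for compound submodules it follows by restricting the witnessing internal signals, which your setup implicitly provides, so the argument goes through.
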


Using this theorem (possibly recursively applied, in
the case of compound modules made up of compound submodules),
one can show that a given
 feedback-free compound module is forgetful. Moreover, for such a module $M$,
 it is sufficient to show that it behaves like
 another module $M'$ in correct executions (i.e., those on $\R$, rather than
 on certain time intervals) for proving that $M$ is a self-stabilizing implementation of $M'$.

\begin{corollary}\label{cor:s}
Suppose that compound module $M$ satisfies the prerequisites of
  Theorem~\ref{theorem:forgetful}.
Moreover, for a module $M'$ with $I(M')=I(M)$ and $O(M')=O(M)$,
  assume that:
For all input signals $(\In_p:\R\to \{0,1\})_{p\in I(M)}$, it holds
for its output signals that
%\begin{align*}
$\phi_M\left( (\In_p)_{p\in I(M)} \right) \subseteq
\phi_{M'}\left( (\In_p)_{p\in I(M)} \right)$.
%\end{align*}
Then, $M$ is a self-stabilizing implementation of $M'$.
\end{corollary}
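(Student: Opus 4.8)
The plan is to set the stabilization time to $T:=F$, where $F$ is the forgetfulness parameter guaranteed by Theorem~\ref{theorem:forgetful}, and then verify the conditions of Definition~\ref{def:selfstab} directly. First I would record that $F<\infty$: the circuit graph is finite and acyclic, so the set ${\cal P}$ of paths is finite and each path has finite length, whence the maximum defining $F$ is a maximum of finitely many finite sums of finite quantities $F_{S_i}\in\R^+_0$. Theorem~\ref{theorem:forgetful} then tells us that $M$ itself is $F$-forgetful, and it is this single property of $M$ (together with the containment $\phi_M\subseteq\phi_{M'}$) that I would use.

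Next I would fix an interval $I=[t^-,t^+]$ with $t^+\geq t^-+F$, put $I'=[t^-+F,t^+]$, and take an arbitrary interval-restricted execution of $M$ that is correct on $I$, i.e.\ $(\Out_p^I)_{p\in O(M)}\in\phi_M((\In_p^I)_{p\in I(M)})$. The goal is to produce a genuine ($\R$-domain) execution of $M$ that agrees with this one on the tail $I'$. To do so I would extend the inputs arbitrarily: for each $p\in I(M)$ pick any $\In_p':\R\to\{0,1\}$ restricting to $\In_p^I$ on $I$. Applying $F$-forgetfulness (Definition~\ref{def:forgetfulness}) to the given interval-restricted output, its inputs, and the chosen extensions $\In_p'$ then yields complete output signals $(\Out_p')_{p\in O(M)}\in\phi_M((\In_p')_{p\in I(M)})$ whose restriction to $I'$ equals that of $(\Out_p^I)$. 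Thus $((\In_p'),(\Out_p'))$ is a complete correct execution of $M$ that coincides with the given interval-restricted execution on all of $I'$.

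Finally I would transfer this to $M'$ and read off interval correctness. By the corollary's hypothesis $\phi_M((\In_p'))\subseteq\phi_{M'}((\In_p'))$, the complete signals $((\In_p'),(\Out_p'))$ also form a correct execution of $M'$ on $\R$. Since $\In_p'$ agrees with $\In_p^{I'}$ and $\Out_p'$ agrees with $\Out_p^{I'}$ throughout $I'$, this complete execution of $M'$ is exactly the witness required by the definition by extension for correctness of $(\In_p^{I'},\Out_p^{I'})$ on the subinterval $I'$; hence $(\Out_p^{I'})_{p\in O(M)}\in\phi_{M'}((\In_p^{I'})_{p\in I(M')})$, which is the conclusion demanded by Definition~\ref{def:selfstab}. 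As this holds for every such $I$, $M$ is an $F$-stabilizing, and therefore self-stabilizing, implementation of $M'$.

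I expect the crux to be the middle step, namely the correct invocation of forgetfulness. Its whole force is that an interval-restricted execution of $M$ need not extend to the left in a globally consistent way, yet $F$-forgetfulness lets us overwrite the entire pre-history $(-\infty,t^-)$ with an arbitrary honest input on $\R$ while leaving the output on $I'$ untouched; this is what upgrades interval correctness of $M$ into full-line correctness of $M$. The remaining care is bookkeeping: ensuring that forgetfulness is applied with the compound-module interval semantics in which the hypothesis $(\Out_p^I)\in\phi_M((\In_p^I))$ is phrased, and that the final inclusion is cashed out through the definition-by-extension notion of interval correctness for $M'$ rather than some other reading.
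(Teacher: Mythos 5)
Your argument is correct and is precisely the intended one: the paper states this corollary without proof, and the natural derivation is exactly what you give — apply $F$-forgetfulness of $M$ (from Theorem~\ref{theorem:forgetful}) to extend any interval-correct execution to a complete correct execution of $M$ agreeing on $I'=[t^-+F,t^+]$, push it into $\phi_{M'}$ via the containment hypothesis, and read off interval correctness of $M'$ on $I'$ through the definition by extension, yielding $T=F<\infty$. Your closing caveats (finiteness of $F$, and that the conclusion is cashed out via the extension-based notion of interval correctness for $M'$) are exactly the right points to watch.
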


These results ensure that self-stabilization follows without further
  ado not only in trivial cases where an erroneous state is
  \emph{instantaneously} forgotten and overwritten by new input.
By using compound modules in a hierarchical manner, one can
  encapsulate the heart of a proof of self-stabilization in the appropriate
  system layer and separate aspects from different layers that are unrelated.
This plays along nicely with the standard approach for proving self-stabilization of
  complex systems, which is to establish properties of increasing strength and
  complexity in a bottom-up fashion, advancing from very basic aspects to high-level 
  arguments.

\noindent\textit{Beyond feedback-free compound modules.}
Unfortunately, however, it is not hard to see that if (the circuit graph of) a
  compound module $M$ is not feedback-free, self-stabilization of $M$ does not necessarily
  follow from the fact that all submodules are forgetful.
An example of such a circuit is the oscillator in Figure~\ref{fig:osc}.
There are, however, circuits with feedback loops that stabilize.

\begin{figure}[b]
\centering
\begin{tikzpicture}[>=latex,semithick,scale=1.0,transform shape]

\tikzstyle{arrow}=[draw, -latex]

% boundary
\draw [rounded corners,color=blue,dashed] (0.9,-0.2) rectangle (5.8,1.6);

% submodules
\node[thick,rectangle,draw=black,minimum width=1cm,minimum height=0.5cm, rounded corners] at (2.2,1) (del1) {$d$};
\node[not gate US, draw] at (3.5,1) (inv1) {};
\node[thick, rectangle,draw=black,minimum width=0.8cm,minimum height=0.8cm] at ($ (inv1)+(1.3,0) $) (mem) {};
\node at ($ (mem.center)+(0.15,-0.15) $) {\VarClock}; %{\showclock{0}{45}}; 

% output
\node at ($ (mem)+(1.5,0) $) (o) {$\Ysig$};

\draw[->] (del1.east) -- (inv1.input);
\draw[->] (inv1.output) -- (mem.west);
\draw[->] (mem.east) -- (o);

%\node at ($ (3,1) - (0,0.3) $) {$x$};
\fill[black] ($ (mem.east) + (0.3,0) $) circle (2pt);
\draw[->] ($ (mem.east) + (0.3,0) $) -- ++(0,-1.0) -- ++(-4.4,0) |- (del1.west);

% submodule names

\node at ($ (inv1) - (-0.1,0.5) $) {$\textsc{Inv}$};
\node at ($ (del1) - (0,0.5) $) {$\textsc{Chn}$};
\node at ($ (mem) - (0,0.7) $) {$\textsc{Mem}$};
\node at ($ (mem) - (0.7,-0.27) $) {\small \sc X};

\end{tikzpicture} 
\caption{Self-stabilizing oscillator module.}\label{fig:osc_mem} 
\end{figure}
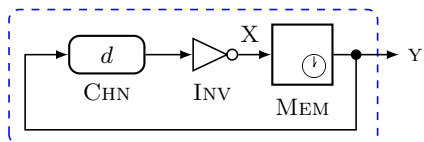

\begin{example}
Figure~\ref{fig:osc_mem} shows  a self-stabilizing variant of the
  oscillator from Figure~\ref{fig:osc} (without enable input).
The module consists of (i) a watchdog-timed memory-cell \textsc{Mem}
  whose output~$\textsc{Mem Out} = Y$;
  the output is~$1$ at time~$t$ iff there is a time
  $t'\in (t-T,t)$ where output~$Y(t')=0$ and input~$X(t')=1$,
(ii) a succeeding fixed delay channel with delay~$d \le T$, and
(iii) an inverter. Note that (i) implies that the set 
$\{t\in \R_0^+\,|\,Y(t)=0\}$ is closed.
\end{example}

We now demonstrate how to formalize this example and its proof in our state-oblivious
  framework.
The (basic) module \textsc{osc} has no input and one output $Y$.
Recall that, for basic modules, specifications involve defining $\phi_M$ on
  executions on $\R$ only.
Hence, the module specification $\phi_{osc}$ is fully defined by deciding whether
  some function
$(\Out_Y:\R\to \R) \in \phi_{osc}(\emptyset)$ or not.
We define this to be the case for all functions satisfying that
  $\exists \delta\in [0,T+d)$ so that
\begin{equation*}
\Out_Y(t) = \begin{cases}
1 & \text{if }\exists z\in \mathbb{Z}, \exists \tau \in (0,T): t=\delta+z(T+d)+\tau\\
0 & \text{else}\,.
\end{cases}
\end{equation*}
Intuitively, $\delta$ denotes the fixed time offset of the signal, and $\tau$ the
  time the signal is 1 during the period $t+d$.

If restricted to times $[0,5]$,
  Figure~\ref{fig:exec_mem} shows the execution for $\delta=0$, $d=1$ and
  $T=1.5d=1.5$.
The channel incorrectly forwards the red signal at its input
  during time~$[4,5]$ to the red signal at its output during~$[5,6]$, i.e., is
  not correct during $[5,6]$.
However, the module quickly recovers:
Starting from time $6.3$, the circuit has returned to a
  feasible periodic behavior with $\delta=0.3$ and $\tau=T$.
We next show that this stabilizing behavior is guaranteed.

\definecolor{darkgreen}{rgb}{0,0.8,0}

\newcommand{\T}{1.05}
\newcommand{\de}{0.7}

\begin{figure}[ht]
\centering
\begin{tikzpicture}[>=latex,semithick,scale=1.0,transform shape]

\tikzstyle{arrow}=[draw, -latex]

% axes
\draw[->,line width=0.5pt] (0,2) -- ++(7.5,0);
\draw[->,line width=0.5pt] (0,1) -- ++(7.5,0);
\draw[->,line width=0.5pt] (0,0) -- ++(7.5,0);

\node at ($ (-0.35,0.5)+(0,2) $) {\small \textsc{Chn}};
\node at ($ (-0.35,0.15)+(0,2) $) {\small \textsc{in}};

\node at ($ (-0.35,0.5)+(0,1) $) {\small \textsc{Chn}};
\node at ($ (-0.35,0.15)+(0,1) $) {\small \textsc{out}};

\node at ($ (-0.35,0.5)+(0,0) $) {\small \textsc{Mem}};
\node at ($ (-0.35,0.15)+(0,0) $) {\small \textsc{in}};

% times
\draw[-,line width=0.5pt,dotted] (0.2,0) -- ++(0,2.8);
\draw[-,line width=0.5pt,dotted] ($ (0.2,0)+1*(0.7,0) $) -- ++(0,2.8);
\draw[-,line width=0.5pt,dotted] ($ (0.2,0)+2*(0.7,0) $) -- ++(0,2.8);
\draw[-,line width=0.5pt,dotted] ($ (0.2,0)+3*(0.7,0) $) -- ++(0,2.8);
\draw[-,line width=0.5pt,dotted] ($ (0.2,0)+4*(0.7,0) $) -- ++(0,2.8);
\draw[-,line width=0.5pt,dotted] ($ (0.2,0)+5*(0.7,0) $) -- ++(0,2.8);
\draw[-,line width=0.5pt,dotted] ($ (0.2,0)+6*(0.7,0) $) -- ++(0,2.8);
\draw[-,line width=0.5pt,dotted] ($ (0.2,0)+7*(0.7,0) $) -- ++(0,2.8);
\draw[-,line width=0.5pt,dotted] ($ (0.2,0)+8*(0.7,0) $) -- ++(0,2.8);
\draw[-,line width=0.5pt,dotted] ($ (0.2,0)+9*(0.7,0) $) -- ++(0,2.8);
\draw[-,line width=0.5pt,dotted] ($ (0.2,0)+10*(0.7,0) $) -- ++(0,2.8);

\node at ($ (0.2,0)+ 0*(0.7,0)-(0,0.2) $) {0};
\node at ($ (0.2,0)+ 1*(0.7,0)-(0,0.2) $) {1};
\node at ($ (0.2,0)+ 2*(0.7,0)-(0,0.2) $) {2};
\node at ($ (0.2,0)+ 3*(0.7,0)-(0,0.2) $) {3};
\node at ($ (0.2,0)+ 4*(0.7,0)-(0,0.2) $) {4};
\node at ($ (0.2,0)+ 5*(0.7,0)-(0,0.2) $) {5};
\node at ($ (0.2,0)+ 6*(0.7,0)-(0,0.2) $) {6};
\node at ($ (0.2,0)+ 7*(0.7,0)-(0,0.2) $) {7};
\node at ($ (0.2,0)+ 8*(0.7,0)-(0,0.2) $) {8};
\node at ($ (0.2,0)+ 9*(0.7,0)-(0,0.2) $) {9};
\node at ($ (0.2,0)+10*(0.7,0)-(0,0.2) $) {10};

\newcommand{\newPer}[1]{
% signal chn in
\draw[-,line width=1pt]
   ($ #1 + (0,2.1) + (0,0) $) 
   -- ++(0,0.5) -- ++(\T,0)
   -- ++(0,-0.5) -- ++(\de,0);
% chn out
\draw[-,line width=1pt]
   ($ #1 + (0,1.1) + (0,0.5) $) 
   -- ++(0,-0.5) -- ++(\de,0)
   -- ++(0,0.5) -- ++(\T,0);
% mem in
\draw[-,line width=1pt]
   ($ #1 + (0,0.1) + (0,0) $) 
   -- ++(0,0.5) -- ++(\de,0)
   -- ++(0,-0.5) -- ++(\T,0);
}

% before period:
% signal chn in
\draw[-,line width=1pt]
   ($ (0,2.1) $) -- ++(0.2,0);
% chn out
\draw[-,line width=1pt]
   ($ (0,1.1) + (0,0.5) $) -- ++(0.2,0);
% mem in
\draw[-,line width=1pt]
   ($ (0,0.1) $) -- ++(0.2,0);

% period
\newPer{(0.2,0)}
\newPer{(1.95,0)}

% fault:
% signal chn in
\draw[-,line width=1.5pt,color=white] % some hack to remove old line
   ($ (1.95,0) + (0,2.1) + (\T,0) $) -- ++(\de,0);
\draw[-,line width=1pt,color=red]
   ($ (1.95,0) + (0,2.1) + (\T,0) $) -- ++(\de,0);
% chn out
\draw[-,line width=1pt,color=red]
   ($ (1.95,0) + (0,1.1) + (0,0.5) + (\T,0) + (\de,0) $) -- ++(0.25,0)
   -- ++(0,-0.5) -- ++(0.1,0)
   -- ++(0,0.5) -- ++(0.1,0)
   -- ++(0,-0.5) -- ++(0.1,0)
   -- ++(0,0.5) -- ++(0.15,0)
   -- ++(0,-0.5);
% mem in
\draw[-,line width=1pt]
   ($ (1.95,0) + (0,0.1) + (0,0) + (\T,0) + (\de,0) $) -- ++(0.25,0)
   -- ++(0,0.5) -- ++(0.1,0)
   -- ++(0,-0.5) -- ++(0.1,0)
   -- ++(0,0.5) -- ++(0.1,0)
   -- ++(0,-0.5) -- ++(0.15,0)
   -- ++(0,0.5);

% after the fault:
% signal chn in
\draw[-,line width=1pt]
   ($ (1.95,0) + (0,2.1) + (\T,0) + (\de,0)$) -- ++(0.25,0);

% shifted period starts
% signal chn in
\draw[-,line width=1pt]
   ($ (1.95,0) + (0,2.1) + (\T,0) + (\de,0) + (0.25,0) $) 
   -- ++(0,0.5) -- ++(\T,0)
   -- ++(0,-0.5) -- ++(\de,0);
% chn out
\draw[-,line width=1pt]
   ($  (1.95,0) + (0,1.1) + (\T,0) + 2*(\de,0) $) 
   -- ++(0.25,0)
   -- ++(0,0.5) -- ++(\T,0);
% mem in
\draw[-,line width=1pt]
   ($  (1.95,0) + (0,0.1) + (\T,0) + 2*(\de,0) + (0,0.5) $) 
   -- ++(0.25,0)
   -- ++(0,-0.5) -- ++(\T,0);

% normal period
\newPer{(1.95,0) + 2*(\T,0) + 2*(\de,0) + (0.25,0)}

\draw[->,shorten >=2pt, shorten <=2pt] ($ (0.2,2.0) $) --
   ($ (0.2,1.1) + (\de,0) $) node[pos=0.35,below] {$d$};

\draw [decorate,decoration={brace,amplitude=3pt},xshift=0,yshift=0]
($  (0.2,2.1) + (0,0.6) $) -- ($  (0.2,2.1) + (0,0.6) + (\T,0) $)
node[black,midway,yshift=0.27cm,xshift=0.03cm] {\footnotesize $T$};

\draw [decorate,decoration={brace,amplitude=3pt},xshift=0,yshift=0]
($  (0.2,2.1) + (0,0.6) + (\T,0) $) -- ++(\de,0)
node[black,midway,yshift=0.27cm,xshift=0.03cm] {\footnotesize $d$};

\end{tikzpicture} 
\caption{Execution part of self-stabilizing oscillator module
  with transient channel fault (incorrect propagation in red).}
\label{fig:exec_mem} 
\end{figure}

\begin{lemma}
If $T\geq d$, the compound module given in Figure~\ref{fig:osc_mem} is a
  $(T+2d)$-stabilizing implementation of \textsc{osc}.
\end{lemma}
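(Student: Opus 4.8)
The plan is to collapse the feedback loop into a single self-contained recurrence for the signal $Y = \textsc{Mem Out} = \textsc{Chn in}$, and then to show that this recurrence forces the periodic pattern of $\phi_{osc}$ after the warm-up phase. Writing $C$ for the channel output \textsc{Chn out} and $X = \textsc{Mem in}$ for the inverter output, I first unfold the interval-restricted specification of the compound module on $I=[t^-,t^+]$ into constraints on the triple $(Y,C,X)$. By the definition by extension for the basic submodules, the fixed-delay channel yields $C(t)=Y(t-d)$ for $t\in[t^-+d,t^+]$ (its output is unconstrained on the warm-up interval $[t^-,t^-+d)$, where it may still depend on inputs before $t^-$), the zero-time inverter yields $X(t)=1-C(t)$ on all of $I$, and the memory cell yields, for $t\in[t^-+T,t^+]$ (where its window $(t-T,t)$ lies inside $I$), that $Y(t)=1$ iff there is $t'\in(t-T,t)$ with $Y(t')=0$ and $X(t')=1$; recall also that $\{t: Y(t)=0\}$ is closed.

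Substituting the channel and inverter relations into the memory constraint, I obtain, for all $t\ge t^-+T+d=:t_1$ (so that the whole window $(t-T,t)$ lies where the channel binds), the self-contained recurrence
\begin{equation*}
Y(t)=1 \iff \exists\, t'\in(t-T,t):\ Y(t')=0 \ \wedge\ Y(t'-d)=0.
\end{equation*}
Calling $w$ a \emph{double-zero} if $Y(w)=Y(w-d)=0$, this says that on $[t_1,t^+]$ the set $\{Y=1\}$ equals the union of the open intervals $(w,w+T)$ over all double-zeros $w$.

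From here I establish the periodic structure in two steps. First, any maximal interval on which $Y\equiv 0$ lying in $[t_1,t^+]$ has length at most $d$: if such a $0$-run $[a,b]$ had $b-a>d$, then every $w\in(a+d,b)$ would be a double-zero, forcing $Y=1$ immediately above $w$ inside the very same $0$-run --- a contradiction. Second, I argue forward propagation of double-zeros: once $w$ is a double-zero with $w+T>t^-+T+2d=:t_2$, the interval $(w,w+T)$ is an active $1$-run, so the next $0$-run starts at $w+T$; no point $s$ of $(w+T,w+T+d)$ can be a double-zero, because then $s>t_2$ gives $s-d\in(w,w+T)$ with $s-d\ge t_1$, whence $Y(s-d)=1$; since (by the first step) the $0$-run cannot extend beyond $w+T+d$ and cannot end earlier without a double-zero in the active region to restart $Y$, of which there is none, the next double-zero is exactly $w+T+d$. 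Consequently all double-zeros above $t^-+2d$ lie on a single arithmetic progression of step $T+d$, while double-zeros at or below $t^-+2d$ have their $1$-intervals ending at or before $t_2$ and thus do not influence $Y$ on $I'=[t_2,t^+]$.

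It then remains to read off the offset. Choosing any double-zero $w_{\min}$ that influences $I'$ and setting $\delta\equiv w_{\min}\pmod{T+d}$ in $[0,T+d)$, the signal $Y$ on $I'$ coincides with the $\phi_{osc}$-signal for this $\delta$, matching the open $1$-intervals and the closed zero set (if no such double-zero exists, the first step shows $Y\equiv 0$ is impossible on any subinterval longer than $d$, so $I'$ is short and trivially matches a $0$-gap of an arbitrary periodic signal). This periodic signal is a complete correct execution of \textsc{osc} agreeing with $Y$ on $I'$, so $(\Out_Y^{I'})\in\phi_{osc}(\emptyset^{I'})$; by Definition~\ref{def:selfstab} this proves the module is a $(T+2d)$-stabilizing implementation of \textsc{osc}. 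The main obstacle is the bookkeeping on the two warm-up intervals $[t^-,t^-+d)$ and $[t^-,t^-+T)$: one must verify that the partially-constrained behavior there cannot inject a double-zero whose $1$-interval reaches into $I'$ off the progression, which is exactly what pins the stabilization time to $T+2d$ rather than a looser bound, while also tracking the open/closed boundary conventions of the memory cell when matching $\phi_{osc}$.
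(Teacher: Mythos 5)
Your reduction of the loop to the self-contained recurrence $Y(t)=1\iff\exists t'\in(t-T,t):Y(t')=0\wedge Y(t'-d)=0$ is sound, and your steady-state analysis (zero-runs of length at most $d$, forward propagation of ``double-zeros'' with step $T+d$) reproduces, in a different packaging, what the paper does in its steps 5--7 and the closing induction. The genuine gap is exactly the part you label ``the main obstacle'' and then do not carry out: the recurrence only binds for $t\geq t_1=t^-+T+d$, so your exclusion of double-zeros inside a $1$-interval $(w,w+T)$ (which is what you need for ``the next double-zero is exactly $w+T+d$'') fails for double-zeros $w$ lying in the window $(t^-+2d,\,t^-+T+d)$. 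Whenever $T>d$ this window is nonempty, and the warm-up behavior of \textsc{Mem} and \textsc{Chn} there is only weakly constrained (their specification-by-extension windows reach below $t^-$), so several mutually non-congruent double-zeros can coexist in it --- the transient in Figure~\ref{fig:exec_mem} produces precisely such a cluster. Hence your intermediate claim that \emph{all} double-zeros above $t^-+2d$ lie on a single arithmetic progression is false as stated; what one must show instead is that only the \emph{last} such double-zero before $t_1$ anchors the progression visible on $I'$, and that its trailing $1$-interval clears $t^-+T+2d$. That argument is not bookkeeping to be deferred: it is the content that makes $T+2d$ (rather than some looser bound) come out, and it is where the paper does its real work --- establishing a zero $t^*\in[0,T+d]$ of $Y$, taking the maximal zero-run $[t^*,t_0]$ with $t_0\leq t^*+d\leq T+2d$, and proving $X(t_0)=1$ so that a single, well-defined launch point for the periodic pattern exists.

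Two smaller points. First, your forward-propagation step silently assumes $w+T$ itself is not a double-zero; for $T=d$ one has $Y(w+T-d)=Y(w)=0$, so it is one, and the claimed period breaks (the paper's own step~7 has the same boundary issue at $T=d$, so this is shared, but your write-up should at least note it). Second, your degenerate case (``no influencing double-zero exists'') should be argued via the existence of a zero and then of a double-zero within $[t^-,t^-+T+2d]$, mirroring the paper's step~1; as written it conflates ``$I'$ is short'' with ``no double-zero exists,'' which are not equivalent.
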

\begin{proof}
Wlog., assume that the compound module follows its specification during
$[t^-,t^+)=[0,\infty)$. 
\begin{compactenum}
  \item There is some time $t^*\in [0,T+d]$ when $Y(t^*)=0$. (Otherwise, the
  input to \textsc{Mem} at every time $t\in [d,T+d]$ would be
  $X(t)=\overline{Y}(t-d)=0$ (with $\overline{Y}$ denoting negation),
  entailing $Y(T+d)=0$ by the specification of
  \textsc{Mem}---a contradiction.)
  \item Let $t_0\in [t^*,t^*+d]$ be maximal with the property that
  $Y(t)=0$ for all $t\in [t^*,t_0]$ ($t_0$ exists because $Y(t^*)=0$ and
  $\{t\in \R_0^+\,|\,Y(t)=0\}$ is closed).
  \item $X(t)=0\vee Y(t)=1$ for every $t\in (t_0-T,t_0)$ (specification of
  \textsc{Mem}).
  \item (a) $t_0<t^*+d$: Then, $Y(t_0^+)=\lim_{\varepsilon\to 0+}Y(t_0+\varepsilon)=1$ 
  by maximality of $t_0$; hence $X(t_0)=1$
  by the specification of \textsc{Mem} and 3.\\
  (b) $t_0=t^*+d$: Then, the channel ensures $X(t_0)=\overline{Y}(t^*)=1$.
  \item $Y(t)=1$ for $t\in (t_0,t_0+T)$ (specification of \textsc{Mem}).
  \item $X(t)=\overline{Y}(t-d)=0$ for $t\in (t_0+d,t_0+T+d)\supseteq
  (t_0+T,t_0+T+d)$ (as $T\geq d$).
  \item $Y(t)=0$ for $t\in [t_0+T,t_0+T+d]$ (specification of \textsc{Mem}).
\end{compactenum}
We can now determine $Y$ for larger times inductively, showing for
$t_i:=t_0+i(T+d)$, $i\in \N$, that $Y(t)=1$ for $t\in (t_i+d,t_i+T+d)$ and 
$Y(t)=0$ for $t\in [t_i+T,t_i+T+d]$. Hence, the execution is feasible for module
\textsc{osc} during $[t_0,\infty)$. As $t_0\leq t^*+d\leq T+2d$, the claim
follows.
\end{proof}
By contrast, choosing $d > T$ leads to a circuit that does not necessarily
  self-stabilize.

While the circuit in Figure~\ref{fig:osc_mem} is a self-stabilizing
  implementation of \textsc{osc}, it is not fault-tolerant.
A single permanent fault will stop it from operating correctly.
However, the principle of using ``forgetful'' memory to achieve
  self-stabilization of oscillatory circuits can
  be carried over to fault-tolerant \emph{distributed} oscillators like DARTS:
In \cite{dolev14,dolev14fatal}, we leveraged the approach in the design of fault-tolerant and
  self-stabilizing solutions to clock generation (and
  clock distribution~\cite{DFLPS13:HEX}).
FATAL$^+$, the proposed clock generation
  scheme, is essentially a distributed oscillator composed of $n\geq 3f+1$ clock
  generation nodes, which self-stabilizes in time $\BO(n)$ with probability
  $1-2^{-\Omega(n)}$ even in the presence of up to $f$ Byzantine
  faults~\cite{dolev14fatal}.

\section{Module Composition Artefacts: The Weird Module}
\label{sec:weirdmodule}

In this section, we will show that module composition in our framework
  sometimes leads to surprising effects.
As a consequence, one has to be careful when composing
  innocently looking modules that hide
  their true complexity behind deceptively simple specifications.  

Like in \cref{sec:transientfaults}, we will restrict signals and 
  feasible executions in module specifications from $\R$ to arbitrary
  subintervals $I=[t^-,t^+]\subseteq \R$.
To make such interval-restrictions explicit, we will sometimes write
  $\sigma^I$, $\In^I$, $\Out^I$, $\E^I$ etc.
Note that such intervals $I$ could also be open $(t^-,t^+)$, closed
  $[t^-,t^+]$, or half-open, but are always contiguous. 

A sequence of signals $(\sigma_i)_{i\in C}$ defined on intervals 
  $I_1\subseteq I_2\subseteq\ldots$ with $I_i \subseteq I_{i+1}$ for all 
  $i\in C$, for $C=\{1,\ldots,n\}$ or $C=\N$, is called a \emph{covering} 
  of a signal $\sigma$ defined on $I=\bigcup_{i\in C}I_i$ if, 
  for all $i \in C$, $\sigma_i=\sigma^{I_i}$.
Clearly, any sequence of signals $(\sigma_i)_{i\in C}$ on
  $I_1\subseteq I_2\subseteq\ldots$ 
  with the property that $\sigma_i=\sigma_{i+1}^{I_i}$ for
  all $i \in C$ defines a unique $\sigma$ on $I=\bigcup_{i\in C}I_i$ 
  such that $(\sigma_i)_{i\in C}$ is a covering of $\sigma$. 
For $C=\N$, we can hence set $\lim_{i\to \infty}\sigma_i = \sigma$,
  where $\sigma$ is defined on $\lim_{i\to \infty} I^i = I$.
  These definitions and results naturally carry over to interval-restricted
  executions, i.e., pairs of sets of input and output signals of a module.

\begin{definition}[Limit-closure]\label{def:limitclosure}
Module $M$ is \emph{limit-closed} iff, for every covering
$(\E_i)_{i\in \N}$ consisting of interval-restricted
executions $E_i$ in the set $\ME_M$ of all interval-restricted
executions of $M$, it holds that 
  $\lim_{i\to \infty}\E_i\in \ME_M$.
\end{definition}

Not every module is limit-closed, as the following example demonstrates.

\begin{example}
Consider the module specification $\WM$, subsequently called 
  the\linebreak \emph{weird module}: It has no 
  inputs and only a single output, which is required to 
  switch from $0$ to $1$ within finite time and have no other
  transitions.
\end{example}
The $\WM$ can be seen as an archetypal asynchronous module, 
  as the transition must occur in finite time, but there is no known bound on the time
  until this happens. 

For every $i\in \N$, the execution $\E_i$ defined 
  on $[-i,i]$ with the output signal being constant $0$ is feasible 
  for $\WM$, as it can be extended to some execution on $\R$ where the 
  transition to $1$ occurs, e.g., at time $i+1$;
  it is thus an extendible module according to Definition~\ref{def:extendible}.
  However, the limit of $\E=\lim_{i\to\infty}\E_i$ is the unique
  execution on $\R$ with
  output constant $0$, which is infeasible for $\WM$.
According to \cref{def:limitclosure}, the specification of $\WM$ is hence extendable
  but not limit-closed.
Conversely, limit-closure does not necessarily
  imply extendibility either, as the latter requires that \emph{every} execution 
  defined on some interval $I$ can be extended to an execution on $\R$; 
  limit-closure guarantees this only for interval-restricted executions
  that are part of coverings.

\begin{definition}[Finite-delay \& bounded-delay module]\label{def:FDmodules}
Module $M$ has \emph{finite delay} (FD), iff every infeasible execution $\E_M\not\in \ME_M$ has a
  finite infeasible restriction, i.e.,
%\begin{align*}
$\left( \E_M \not\in \ME_M \right) \Rightarrow
\left( \exists \mbox{ finite $I \subset \R$}: \E_M^I \not\in \ME_M \right)$.
%\end{align*}
An FD module $M$ is a \emph{bounded-delay module} (BD), if $I$ 
  may not depend on the particular $\E_M$ in the FD definition. 
Finally, $M$ is a \emph{bounded delay module with delay bound $B\in \R^+_0$},
  if it is BD and $|I| \leq B$.
\end{definition}

Recall that if a module is correct in an execution during an interval $I$
  it is always also correct within a subinterval of $I$ in the same execution.
On the other hand, the other implication
$(\exists \mbox{ finite $I \subset \R$}: \E_M^I
\not\in \ME_M) \Rightarrow (\E_M \not\in \ME_M)$ always holds, by our
definition of a restriction.
Thus, for FD modules, it holds that
%\begin{align*}
$\left( \E_M \not\in \ME_M \right) \Leftrightarrow
\left( \exists \mbox{ finite $I \subset \R$}: \E_M^I \not\in \ME_M \right)$. 
%\end{align*}

According to \cref{def:FDmodules}, $\WM$ is not a finite-delay 
  module, as any finite restriction of the infeasible all-zero trace on $\R$ is
  feasible.
More generally, we have the following lemma:

\begin{lemma}\label{lemma:limit-closed}
A module is limit-closed iff it has finite delay.
\end{lemma}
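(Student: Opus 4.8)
The plan is to prove both implications by contraposition, exploiting the characterization recorded just above the lemma: for a finite-delay module, $\E_M\not\in\ME_M$ holds \emph{iff} some finite restriction $\E_M^I$ is already infeasible. Throughout I would lean on the subset-closedness of correctness guaranteed by the definition by extension (cf.~\cite[Lem.~3.3]{dolev14}): if an interval-restricted execution is feasible on $I$, then it is feasible on every subinterval of $I$; equivalently, infeasibility propagates to every \emph{super}interval. Both notions in the lemma are really ``finiteness/compactness'' conditions on the same object (FD says infeasibility is witnessed on a finite interval; limit-closure says feasibility is closed under increasing coverings), so the two contrapositives mirror each other.

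For \emph{limit-closed $\Rightarrow$ finite delay} I would argue the contrapositive. Assume $M$ is not finite-delay, witnessed by an infeasible execution $\E$ all of whose finite restrictions are feasible; its domain must be unbounded, since otherwise $\E$ would be its own finite infeasible restriction, so I may take $\E$ on $\R$. Then $(\E^{[-i,i]})_{i\in\N}$ is a covering: the closed intervals $[-i,i]$ increase to $\R$, each piece is a finite restriction of $\E$ and hence lies in $\ME_M$, yet the limit is $\E\not\in\ME_M$. This exhibits a covering of feasible executions with infeasible limit, so $M$ is not limit-closed. (Concretely, this is exactly what the $\WM$ example does.)

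For \emph{finite delay $\Rightarrow$ limit-closed}, let $(\E_i)_{i\in\N}$ be any covering with all $\E_i\in\ME_M$, on nested intervals $I_1\subseteq I_2\subseteq\cdots$ with limit $\E$ on $I=\bigcup_i I_i$, and suppose for contradiction $\E\not\in\ME_M$. Finite delay yields a finite interval $J$ with $\E^J\not\in\ME_M$; enlarging $J$ to its closure (infeasibility is preserved when passing to a superinterval) I may assume $J=[a,b]$ is closed and bounded. Because $J\subseteq I=\bigcup_i I_i$ with the $I_i$ nested intervals, a compactness/convexity argument locates $J$ inside a single piece: the endpoints satisfy $a,b\in J\subseteq I$, hence $a\in I_{i_1}$ and $b\in I_{i_2}$, so for $i_0=\max\{i_1,i_2\}$ both lie in $I_{i_0}$ by nesting, and convexity of the interval gives $[a,b]\subseteq I_{i_0}$. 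Then $\E^J=\E_{i_0}^J$ is a restriction of the feasible $\E_{i_0}$, so by subset-closedness $\E^J\in\ME_M$ --- contradicting $\E^J\not\in\ME_M$. Hence $\E\in\ME_M$ and $M$ is limit-closed.

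The main obstacle lives entirely in this last step, in the interplay between the (possibly open or half-open) limit interval $I$ and the compactness argument. The step is clean when $I=\R$, since then the closure $\overline{J}$ of any finite witness automatically satisfies $\overline{J}\subset\R$, and this is the case actually invoked in the applications. Care is required, however, when $I$ is a proper subinterval of $\R$ and the infeasible witness $J$ abuts an endpoint of $I$ that $I$ itself does not contain: there $\overline{J}\not\subseteq I$, so one cannot blindly pass to the closure. Making the reduction rigorous hinges on using subset-closedness to replace $J$ by a closed bounded witness whose closure still lies in $I$ (equivalently, on controlling infeasibilities that only materialize in the limit at an excluded endpoint) --- this interval-boundary bookkeeping is the delicate part of the argument, the remainder being the routine covering and restriction manipulations above.
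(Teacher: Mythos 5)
Your proof is correct and follows essentially the same route as the paper's: one direction via the nested covering $\E_i=\E^{I\cap[-i,i]}$ of an infeasible execution all of whose finite restrictions are feasible, the other by extracting a finite infeasible witness $\E^J$ from the limit and locating $J$ inside some $I_{i_0}$ to contradict subset-closedness. Two minor remarks: (i) in the first direction a non-FD witness need only have an \emph{unbounded} domain $I$, not domain $\R$, so the covering should be written as $\E^{I\cap[-i,i]}$ rather than assuming $\E$ lives on all of $\R$; (ii) the ``interval-boundary bookkeeping'' you flag as the delicate part is a genuine subtlety, but the paper's own proof simply asserts that for sufficiently large $i$ the witness interval is contained in the domain of $\E_i$ --- i.e., it implicitly takes the finite witness to be compact and the covering to exhaust $I$ --- so your treatment is, if anything, more careful than the source.
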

\begin{proof}
Suppose $M$ is limit-closed. Given an arbitrary $\E\notin \ME_M$, defined on 
$I\subseteq \R$, consider the covering $\E_i=\E^{I\cap [-i,i]}$, $i\in \N$. Then, either there is some $i$ so that
$\E_i\notin \ME_M$, or we reach the contradiction that $\E=\lim_{i\to
\infty}\E_i\in \ME_M$ as $M$ is limit-closed.

Conversely, suppose that $M$ is a finite delay module. 
Consider an arbitrary infinite covering $\{\E_i\,|\,\E_i\in
\ME_M\}_{i\in \N}$, and denote by $\E$ its limit. If $\E\notin \ME_M$, then by
\cref{def:FDmodules} there is a finite $\E^I \notin \ME_M$ that is a restriction of $\E$. As
$\{\E_i\}_{i\in \N}$ is a covering, for sufficiently large $i$, it holds that
the interval $I$ on which $\E^I$ is defined is contained in the interval on which
$\E_i$ is defined. Hence, $\E^I \notin \ME_M$ is a restriction of $\E_i\in \ME_M$, which is a contradiction to the fact that $\E^I$ must be feasible.
\end{proof}

At that point, the question arises whether and when the composition of
  modules preserves bounded resp.\ finite delays. 
The following \cref{cor:presBD} shows that 
  this is the case for feedback-free compositions of BD modules:

\begin{corollary}[Preservation of BD]\label{cor:presBD}
Suppose compound module $M$ is feed\-back-free with  circuit graph $G_M$
  and each of its submodules $S\in {\cal S}_M$ is FD.
Then, $M$ is FD.
Moreover, if $S\in {\cal S}_M$ is BD with delay bound $B_S$, then ${\cal S}_M$ is
  BD with delay bound 
\begin{equation*}
B = \max_{\substack{(S_1,\ldots,S_k)\\\text{path in }G_M}}\left\{\sum_{i=1}^k
B_S\right\}.
\end{equation*}
\end{corollary}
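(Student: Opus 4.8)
The plan is to treat the two claims separately but with a common engine. For the first (FD) claim I would invoke \cref{lemma:limit-closed} and reduce ``$M$ is FD'' to ``$M$ is limit-closed''. The tempting shortcut is that the set of \emph{full} interval-restricted executions -- those that additionally carry a witnessing assignment $(f_p)_p$ to all internal ports as in \cref{def:comp_module} -- is automatically limit-closed: if $(w_i)_{i\in\N}$ is a covering of such full executions, then for each submodule $S$ the restriction of $w_i$ to $I(S)\cup O(S)$ is a covering of $S$-executions, so limit-closure of $S$ (which holds since $S$ is FD, again by \cref{lemma:limit-closed}) makes the limit feasible for $S$; as this holds for every $S$ simultaneously, the limit is again a full execution. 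The real work is that $\ME_M$ is the \emph{projection} of the full executions onto $I(M)\cup O(M)$, and this projection (the existential quantifier over internal signals in \cref{def:comp_module}) does not commute with taking limits: a covering $(\E_i)$ of $M$-executions need not lift to a covering of full executions, because the per-$\E_i$ witnesses may disagree on internal ports.

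Resolving this is exactly where the hypotheses ``acyclic circuit graph'' and ``forgetful submodules'' (both part of \cref{def:feedbackfree}) must enter, and they have to be essential: the excerpt's own remark that feedback-free compositions of merely FD, non-forgetful, modules can fail to be FD shows that acyclicity alone does not suffice. I would argue by induction on $|{\cal S}_M|$, peeling off a source $S_1$ of the circuit graph, whose input ports all lie in $I(M)$ and are therefore pinned down by $\E$. The inductive step is the claim that one can fix \emph{a single} globally $S_1$-feasible output signal $g_1$ (agreeing with $\Out$ on those ports of $S_1$ that lie in $O(M)$) such that, after re-declaring $g_1$ as an external input, the reduced compound module $M'$ -- with one fewer submodule and still feedback-free -- again has $\E$ (extended by $g_1$) as a covering-limit of feasible executions. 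Here $F_{S_1}$-forgetfulness of $S_1$ (\cref{def:forgetfulness}) is what lets a choice of $g_1$ on an interval survive extension to larger intervals, since its output on $[t^-+F_{S_1},t^+]$ is independent of the input before $t^-$, while limit-closure of $S_1$ assembles these choices into $g_1$ on all of the domain; this is precisely the step that breaks for non-forgetful $S_1$. Absorbing $S_1$ and recursing yields a full witness for $\E$, hence $\E\in\ME_M$, and $M$ is limit-closed, i.e.\ FD.

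For the quantitative (BD) claim I would run the same peeling while tracking interval \emph{lengths} rather than mere finiteness. If each $S$ is BD with bound $B_S$, an infeasibility ``seen'' at a submodule $S$ can be exhibited on a window of length $B_S$; the point is to show that pinning the behavior relevant at $S$ only requires constraining the inputs on a window enlarged by the delay bounds of the submodules on a path from a source to $S$, so that the localizing window for any infeasibility of $M$ has length at most $\max_{(S_1,\dots,S_k)\text{ path in }G_M}\sum_{i=1}^k B_{S_i}=B$. Concretely, peeling a source $S_1$ inflates the required window by $B_{S_1}$ (this is where $S_1$ being BD, not merely FD, is used), and because the graph is acyclic these inflations accumulate additively exactly along directed paths; forgetfulness again guarantees that the behavior relevant on a window depends only on a bounded portion of the past, so the windows stay finite and the bookkeeping closes. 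A chain $S_1\to S_2$ is the sanity check: an infeasibility at $S_2$ lives on a length-$B_{S_2}$ window, pinning $S_1$'s output there costs an extra $B_{S_1}$, giving $B_{S_1}+B_{S_2}$, which is the path-sum.

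The step I expect to be the main obstacle is the inductive claim that a single consistent source-output $g_1$ can be committed to once and for all. This is the only place where forgetfulness does essential work, it is the hinge distinguishing the corollary from the (false) statement for general FD modules, and in the BD case getting it to yield the \emph{exact} bound $B$ -- rather than some looser sum -- requires careful accounting of how far into the past a given submodule's window must reach, propagated along the longest directed paths of $G_M$.
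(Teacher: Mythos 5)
First, a caveat: the paper states \cref{cor:presBD} without an explicit proof, presenting it as a consequence of \cref{lemma:limit-closed} and the forgetfulness machinery of \cref{theorem:forgetful}, so there is no official argument to compare against step by step. On its own terms, your proposal gets the diagnosis right: feasibility of a compound module is an \emph{existential} statement over internal port signals (\cref{def:comp_module}), this projection does not commute with taking limits of coverings, and forgetfulness --- not mere acyclicity --- must carry the weight, exactly as the $\WM$ counterexample in Section~7 forces. Your observation that the set of \emph{full} executions (internal ports exposed) is limit-closed whenever all submodules are is also correct and is the right starting point.

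The gap is in the hinge you yourself flag: the claim that one can commit to a single global source output $g_1$ compatible with the witnesses of \emph{all} the $\E_i$ simultaneously. Limit-closure of $S_1$ applies only to a covering, i.e., a coherent chain $g_1^{(i)} = g_1^{(i+1)}\big|_{I_i}$; but the witnesses $g_1^{(i)}$ extracted from the separate feasibility certificates of the $\E_i$ need not restrict to one another --- a downstream module may be able to produce $\Out\big|_{I_i}$ only from one candidate $S_1$-output and $\Out\big|_{I_{i+1}}$ only from a different one. Forgetfulness as defined in \cref{def:forgetfulness} gives you extension of a \emph{single} interval-restricted output to all of $\R$ (modulo the first $F_{S_1}$ time units); it is not a selection principle over the inverse system of witness sets $W_i = \{h : h$ feasible for $S_1$ on $I_i$ and the reduced module accepts $(\In, h, \Out)$ on $I_i\}$. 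These sets are nonempty with restriction maps between them, but they are arbitrary subsets of $\{0,1\}^{I_i}$, so without a finiteness, compactness, or determinacy hypothesis their inverse limit can be empty --- which is precisely the projection-versus-limit obstruction pushed down one level rather than resolved. The quantitative BD bookkeeping ($B$ as the maximal path sum) inherits the same unproved stitching step, since "feasible on every window of length $B$ implies feasible" again requires assembling local internal witnesses into a global one. To close the argument you would need to show how forgetfulness (perhaps via \cref{theorem:forgetful} applied to the compound module as a whole, so that only a bounded suffix of each witness matters) actually yields the required coherent choice, or add an explicit compactness assumption on the submodule specifications.
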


\noindent\textit{BD is not preserved in arbitrary compound modules.}
Unfortunately, the above corollary does not hold if feedback-loops are
  allowed.
A compound module made up of BD submodules in a feedback-loop need not be
  BD, and sometimes not even FD. 

As an example of the former, consider the \emph{eventual short-pulse filter} 
  ($\eSPF$) introduced in \cite{FNS16:ToC}, which has a single input and a 
  single output port, initially 0. 
Given a single pulse of duration $\Delta>0$ at time $0$ at the input,
  there is a time $T=T(\Delta)$ with 
  $\lim_{\Delta\to 0} T = \infty$ such that the output $o(t)=1$
  for all $t\geq T$.
Yet, the execution where the output never settles to 1
  is not feasible (unless there is no input pulse).
$\eSPF$ can be implemented as a compound module consisting of a
  two-input zero-time \textsc{Or} gate and a two
  pure-delay channels (with delay $1$ and $\sqrt{2}$, respectively) 
  in a feedback-loop. 
By adding an inertial delay channel \cite{FNNS19:TCAD} to the output
  of eSPF, which suppresses all pulses with duration less than~1 (and is
  hence also a BD module\footnote{Since we can consider the inertial
  delay channel to be a basic module here, we need not care about
  implementability at that point.}), we
  obtain a module $\eSPF'$ that generates exactly one transition from
  0 to 1 at the output. 
Module $eSPF'$ is FD, as a finite interval
  $I=[0,T]$ that guarantees $\E^I \in \ME_{\eSPF'}$ can be computed from 
  the known $\Delta$ in every given execution $\E$.
However, $\eSPF'$ is not BD, albeit all its submodules are BD.
Consequently, for compound modules that are not feedback-free, 
  \cref{cor:presBD} need not hold.

\noindent\textit{Unexpected properties of the $\WM$ module.}
While the results on BD align with our intuition, similar
  properties do not hold for FD modules.
To show this, let us add another BD basic submodule that acts 
  as a random generator (which is of course also BD) for
  generating an input pulse of duration $\Delta>0$ to $\eSPF$ (now considered
  a basic module).
We obtain a feedback-free compound module implementation of $\WM$,
  which is not even FD!
Consequently, and surprisingly, one cannot generalize
  \cref{cor:presBD} to the preservation of FD:
Feedback-free compound  modules composed from FD submodules are \emph{not} always FD: 
  $\eSPF'$ is FD, and the random
  generator is even BD, yet the resulting $\WM$ is not FD.

The problem can be traced back to the fact that
  compound modules hide internal ports (the input
  port of $\eSPF$ fed by the random generator in our $\WM$ compound module), 
  which does no longer allow to identify appropriate infeasible finite 
  executions in infinite executions according to \cref{def:FDmodules}.
Our modeling framework allows to completely abstract away this 
  important submodule-internal information, which in turn creates 
  this artefact.
This intuitively suggests that one should not entirely discard 
  the internal structure of a compound module, but rather simulate
  a ``glass box view'' of a submodule as advocated in \cite{BS01:focus}
  by exposing important submodule-signals when composing modules.
A formal understanding of these problems is open, however.

\section{Outlook}
\label{sec:outlook}

We discussed an alternative to the classical state-based modeling and analysis
  approach.
In particular, we reviewed the state-oblivious modeling
  and analysis framework introduced in \cite{dolev14}, and argued its
  utility by means of some examples, in particular, a self-stabilizing
  oscillator.
We also showed that it may create some subtle artefacts when 
  composing modules, which need careful consideration and possibly mitigation.

While we believe that the modeling framework discussed in this article 
  is a sound basis for the formal study of digital circuits and even biological
  systems, it currently lacks several important features that are left open for
  further research:

  The first one is the choice of a formal language for describing signals
  and module specifications. Whereas our simple signals could of course
  be described within a first-order theory on $\R$, it is not clear whether
  this is the most appropriate formalism for concisely expressing the most
  relevant properties of interest. Moreover, module specifications often
  require (all-)quantification over signals, which suggests the need for
  a second-order theory.

  A somewhat related open problem is the
  definition of a proper notion of simulation equivalence for modules
  with \emph{different} interfaces, and simulation-type
  proof techniques similar to the ones known for both untimed \cite{LV95:1} and timed 
  \cite{LV96:2} distributed systems. Unfortunately, the state-obliviousness
  and the unconventional domain $\R$ of our framework does not allow one to just take over state-based simulation techniques.

Another open issue is the explicit handling of metastability, which can currently
  only be expressed by mapping a metastable state to a (high-frequency) pulse
  train.
An obvious alternative is to use a three-valued logic, also
  providing a dedicated metastable state $M$, as advocated in \cite{FFL18:ToC}.
While this extension appears relatively straightforward at the specification level,
  it should also be accompanied by ways of specifying metastable upsets 
  and metastability propagation.

{%\small
\bibliographystyle{splncs04}
\bibliography{../pulse}

\begin{thebibliography}{10}
\providecommand{\url}[1]{\texttt{#1}}
\providecommand{\urlprefix}{URL }
\providecommand{\doi}[1]{https://doi.org/#1}

\bibitem{AO03}
Albert, R., Othmer, H.G.: The topology of the regulatory interactions predicts
  the expression pattern of the segment polarity genes in drosophila
  melanogaster. J. Theor. Biol.  \textbf{223}(1),  1--18 (2003)

\bibitem{AHS02}
de~Alfaro, L., Henzinger, T.A., Stoelinga, M.: Timed interfaces. In: Proc.\
  EMSOFT. pp. 108--122 (2002)

\bibitem{AHLP00}
Alur, R., Henzinger, T., Lafferriere, G., Pappas, G.: Discrete abstractions of
  hybrid systems. Proceedings of the IEEE  \textbf{88}(7),  971--984 (July
  2000). \doi{10.1109/5.871304}

\bibitem{Alur1994}
Alur, R., Dill, D.L.: A theory of timed automata. Theoretical Computer Science
  \textbf{126}(2),  183 -- 235 (1994).
  \doi{http://dx.doi.org/10.1016/0304-3975(94)90010-8},
  \url{http://www.sciencedirect.com/science/article/pii/0304397594900108}

\bibitem{awerbuch85}
Awerbuch, B.: Complexity of network synchronization. JACM  \textbf{32}(4),
  804--823 (1985)

\bibitem{BBN13}
Bartocci, E., Bortolussi, L., Nenzi, L.: {A Temporal Logic Approach to Modular
  Design of Synthetic Biological Circuits}. In: Proc.\ CMSB. pp. 164--177
  (2013)

\bibitem{Bau05}
Baumann, R.: {Radiation-Induced Soft Errors in Advanced Semiconductor
  Technologies}. IEEE Transactions on Device and Materials Reliability
  \textbf{5}(3),  305--316 (2005)

\bibitem{BJAV00}
Bellido-Diaz, M.J., Juan-Chico, J., Acosta, A., Valencia, M., Huertas, J.L.:
  Logical modelling of delay degradation effect in static cmos gates. IEE
  Proceedings - Circuits, Devices and Systems  \textbf{147}(2),  107--117
  (2000)

\bibitem{BS01:focus}
Broy, M., St{\o}len, K.: Specification and Development of Interactive Systems:
  Focus on Streams, Interfaces, and Refinement. Springer-Verlag New York, Inc.
  (2001)

\bibitem{Con03}
Constantinescu, C.: {Trends and Challenges in VLSI Circuit Reliability}. IEEE
  Micro  \textbf{23}(4),  14--19 (2003)

\bibitem{DDP03:SSS}
Daliot, A., Dolev, D., Parnas, H.: {Self-Stabilizing Pulse Synchronization
  Inspired by Biological Pacemaker Networks}. In: {Proc.\ SSS}. pp. 32--48
  (2003)

\bibitem{dijkstra74}
Dijkstra, E.W.: {Self-Stabilizing Systems in Spite of Distributed Control}.
  CACM  \textbf{17}(11),  643--644 (1974)

\bibitem{DW11}
Dixit, A., Wood, A.: {The Impact of New Technology on Soft Error Rates}. In:
  {Proc.\ IRPS}. pp. 5B.4.1--5B.4.7 (2011)

\bibitem{Dolev:1987}
Dolev, D., Dwork, C., Stockmeyer, L.: {On the Minimal Synchronism Needed for
  Distributed Consensus}. JACM  \textbf{34}(1),  77--97 (1987)

\bibitem{DFLPS13:HEX}
Dolev, D., F\"ugger, M., Lenzen, C., Perner, M., Schmid, U.: {HEX: Scaling
  Honeycombs is Easier than Scaling Clock Trees}. In: Proc.\ 25th ACM Symp.\ on
  Parallelism in Algorithms and Architectures (SPAA'13). pp. 164--175 (2013)

\bibitem{dolev14}
Dolev, D., F\"ugger, M., Lenzen, C., Posch, M., Schmid, U., Steininger, A.:
  {Rigorously Modeling Self-Stabilizing Fault-Tolerant Circuits: An
  Ultra-Robust Clocking Scheme for Systems-on-Chip}. JCSS  \textbf{80}(4),
  860--900 (2014)

\bibitem{dolev14fatal}
Dolev, D., F\"ugger, M., Lenzen, C., Schmid, U.: {\textbf{F}ault-tolerant
  \textbf{A}lgorithms for \textbf{T}ick-generation in \textbf{A}synchronous
  \textbf{L}ogic: Robust Pulse Generation}. J. ACM  \textbf{61}(5),
  30:1--30:74 (Sep 2014). \doi{10.1145/2560561}

\bibitem{Dol00}
Dolev, S.: {Self-Stabilization}. MIT Press (2000)

\bibitem{DLS88}
Dwork, C., Lynch, N., Stockmeyer, L.: {Consensus in the Presence of Partial
  Synchrony}. JACM  \textbf{35}(2),  288--323 (1988)

\bibitem{Fischer83}
Fischer, M.J.: {The Consensus Problem in Unreliable Distributed Systems (a
  Brief Survey)}. In: Proc.\ FCT. pp. 127--140 (1983)

\bibitem{fischer85impossibility}
Fischer, M., Lynch, N., Paterson, M.: Impossibility of distributed consensus
  with one faulty process. JACM  \textbf{32}(2),  374--382 (1985)

\bibitem{FFL18:ToC}
Friedrichs, S., F{\"{u}}gger, M., Lenzen, C.: Metastability-containing
  circuits. {IEEE} Trans. Computers  \textbf{67}(8),  1167--1183 (2018).
  \doi{10.1109/TC.2018.2808185}, \url{https://doi.org/10.1109/TC.2018.2808185}

\bibitem{FS11:JECE}
Fuchs, G., Steininger, A.: {VLSI Implementation of a Distributed Algorithm for
  Fault-Tolerant Clock Generation}. J. Electr. Comput. Eng.
  \textbf{2011}(936712) (2011)

\bibitem{FNNS19:TCAD}
{F\"ugger}, M., {Najvirt}, R., {Nowak}, T., {Schmid}, U.: A faithful binary
  circuit model. IEEE Transactions on Computer-Aided Design of Integrated
  Circuits and Systems  \textbf{39}(10),  2784--2797 (October 2020).
  \doi{10.1109/TCAD.2019.2937748}

\bibitem{fugger2020digital}
F{\"u}gger, M., Kushwaha, M., Nowak, T.: Digital circuit design for biological
  and silicon computers. Advances in Synthetic Biology pp. 153--171 (2020)

\bibitem{FNS16:ToC}
F\"ugger, M., Nowak, T., Schmid, U.: Unfaithful glitch propagation in existing
  binary circuit models. IEEE Transactions on Computers  \textbf{65}(3),
  964--978 (March 2016). \doi{10.1109/TC.2015.2435791},
  \url{http://ieeexplore.ieee.org/stamp/stamp.jsp?tp=\&arnumber=7110587}

\bibitem{FS12}
F{\"u}gger, M., Schmid, U.: {Reconciling Fault-Tolerant Distributed Computing
  and Systems-on-Chip}. Distributed Computing  \textbf{24}(6),  323--355 (2012)

\bibitem{GCC00}
Gardner, T.S., Cantor, C.R., Collins, J.J.: Construction of a genetic toggle
  switch in escherichia coli. Nature  \textbf{403},  339--342 (2000)

\bibitem{gorochowski2017genetic}
Gorochowski, T.E., Espah~Borujeni, A., Park, Y., Nielsen, A.A., Zhang, J., Der,
  B.S., Gordon, D.B., Voigt, C.A.: Genetic circuit characterization and
  debugging using rna-seq. Molecular systems biology  \textbf{13}(11), ~952
  (2017)

\bibitem{HMC02}
Hasty, J., McMillen, D., Collins, J.J.: {Engineered Gene Circuits}. Nature
  \textbf{420},  224--230 (2002)

\bibitem{ITRS07}
{International Technology Roadmap for Semiconductors} (2012),
  http://www.itrs.net

\bibitem{kaynar06}
Kaynar, D.K., Lynch, N., Segala, R., Vaandrager, F.: {The Theory of Timed I/O
  Automata}. Morgan \& Claypool Publishers (2006)

\bibitem{KK98}
Koren, I., Koren, Z.: Defect tolerance in {VLSI} circuits: {T}echniques and
  yield analysis. Proceedings of the IEEE  \textbf{86}(9),  1819--1838 (Sep
  1998). \doi{10.1109/5.705525}

\bibitem{lamport78}
Lamport, L.: {Time, Clocks, and the Ordering of Events in a Distributed
  System}. CACM  \textbf{21}(7),  558--565 (1978)

\bibitem{Lam94:TLA}
Lamport, L.: The temporal logic of actions. ACM Trans. Program. Lang. Syst.
  \textbf{16}(3),  872--923 (1994)

\bibitem{LV11}
Lee, E.A., Varaiya, P.: Structure and Interpretation of Signals and Systems.
  LeeVaraiya.org, 2nd edn. (2011)

\bibitem{LV95:1}
Lynch, N., Vaandrager, F.: Forward and backward simulations, {I}: Untimed
  systems. Information and Computation  \textbf{121}(2),  214--233 (Sep 1995)

\bibitem{LV96:2}
Lynch, N., Vaandrager, F.: Forward and backward simulations, {II}: Timing-based
  systems. Information and Computation  \textbf{128}(1),  1--25 (Jul 1996)

\bibitem{Mar81}
Marino, L.: {General Theory of Metastable Operation}. IEEE Transactions on
  Computers  \textbf{C-30}(2),  107--115 (1981)

\bibitem{MA01}
Maza, M.S., Aranda, M.L.: {Analysis of Clock Distribution Networks in the
  Presence of Crosstalk and Groundbounce}. In: {Proc.\ ICECS}. pp. 773--776
  (2001)

\bibitem{nielsen2016genetic}
Nielsen, A.A., Der, B.S., Shin, J., Vaidyanathan, P., Paralanov, V.,
  Strychalski, E.A., Ross, D., Densmore, D., Voigt, C.A.: Genetic circuit
  design automation. Science  \textbf{352}(6281) (2016)

\bibitem{PSL80}
Pease, M., Shostak, R., Lamport, L.: {Reaching Agreement in the Presence of
  Faults}. JACM  \textbf{27},  228--234 (1980)

\bibitem{PB93}
Peercy, M., Banerjee, P.: {Fault Tolerant VLSI Systems}. Proceedings of the
  IEEE  \textbf{81}(5),  745--758 (1993)

\bibitem{santos2020multistable}
Santos-Moreno, J., Tasiudi, E., Stelling, J., Schaerli, Y.: Multistable and
  dynamic crispri-based synthetic circuits. Nature communications
  \textbf{11}(1), ~1--8 (2020)

\bibitem{sivan99lobster}
Sivan, E., Parnas, H., Dolev, D.: {Fault Tolerance in the Cardiac Ganglion of
  the Lobster}. Biological Cybernetics  \textbf{81}(1),  11--23 (1999)

\bibitem{SCBMTH08}
Stricker, J., Cookson, S., Bennett, M.R., Mather, W.H., Tsimring, L.S., Hasty,
  J.: {A Fast, Robust and Tunable Synthetic Gene Oscillator}. Nature
  \textbf{456},  516--519 (2008)

\bibitem{Sutherland89}
Sutherland, I.E.: {Micropipelines}. CACM  \textbf{32}(6),  720--738 (1989)

\bibitem{Tho73}
Thomas, R.: Boolean formalization of genetic control circuits. J. Theor. Biol.
  \textbf{42}(3),  563--585 (1973)

\bibitem{Ung71}
Unger, S.H.: Asynchronous sequential switching circuits with unrestricted input
  changes. IEEE Trans.\ Computers  \textbf{20}(12),  1437--1444 (1971)

\end{thebibliography}
}

\end{document}